\newcommand{\xv}{\boldsymbol{x}}
\newcommand{\yv}{\boldsymbol{y}}
\newcommand{\pv}{\boldsymbol{\phi}}
\newcommand{\lv}{\boldsymbol{\lambda}}
\newcommand{\zerov}{\boldsymbol{0}}
\newcommand{\thetav}{\boldsymbol{\theta}}
\newcommand{\cv}{\boldsymbol{c}}
\newcommand{\Cc}{\mathcal{C}}
\newcommand{\Xc}{\mathcal{X}}
\newcommand{\Yc}{\mathcal{Y}}
\newcommand{\RR}{{\mathbb{R}}}
\newtheorem{theorem}{Theorem}
\newtheorem{lemma}{Lemma}
\newtheorem{corollary}{Corollary}[theorem]
\newtheorem{definition}{Definition}
\newcommand{\comment}[1]{}
\newcommand{\QX}{Q_{\!X\!}}
\newcommand{\EX}{{\mathbb{E}}}
\newcommand{\VAR}{{\text{\normalfont Var}}}
\newcommand{\DKL}{{D}}
\newcommand{\DD}{{\mathrm{d}}}
\DeclareMathOperator*{\argmax}{argmax}
\begin{document}
%
\allowdisplaybreaks

\title{Achievable Rates and Error Exponents\\ for a Class of Mismatched Compound Channels}

\author{Priyanka~Patel, Francesc~Molina and~Albert~Guill\'en i F\`abregas
\thanks{Priyanka Patel was with the Department of Engineering, University of Cambridge, CB2 1PZ Cambridge, UK, and is now with Google, London; e-mail: {\tt pp490@cantab.ac.uk}.

Francesc Molina was with the Department of Engineering, University of Cambridge, CB2 1PZ Cambridge, UK, and with the Department of Signal Theory and Communications, Universitat Polit\`ecnica de Catalunya 08034 Barcelona, Spain, and is now with the Department of Information and Communication Technologies, Universitat Pompeu Fabra, 08018 Barcelona, Spain; e-mail: {\tt fm585@cam.ac.uk}.

Albert Guill\'en i F\`abregas is with the Department of Engineering, University of Cambridge, CB2 1PZ Cambridge, UK, and with the Department of Signal Theory and Communications, Universitat Polit\`ecnica de Catalunya, 08034 Barcelona, Spain; e-mail: {\tt guillen@ieee.org}.
}
\thanks{This work was presented at the 2024 International Z\"urich Seminar on Information and Communication, Z\"urich, Switzerland, and at the 2024 IEEE International Symposium on Information Theory, Athens, Greece. This work was supported in part by the European Research Council under Grants 725411 and 101142747, and in part by the Spanish Ministry of Economy and Competitiveness under Grant PID2020-116683GB-C22. Francesc Molina was also supported by a Margarita Salas Fellowship.}
}

\date{\today}
%

\maketitle

\begin{abstract}
This paper investigates achievable information rates and error exponents of mismatched decoding when the channel belongs to the class of channels that are close to the decoding metric in terms of relative entropy. 
For both discrete- and continuous-alphabet channels, we derive approximations of the worst-case achievable information rates and error exponents as a function of the radius of a small relative entropy ball centered at the decoding metric, allowing the characterization of the loss incurred due to imperfect channel estimation.
We provide a number of examples including symmetric metrics and modulo-additive noise metrics for discrete systems, and nearest neighbor decoding for continuous-alphabet channels, where we derive the approximation when the channel admits arbitrary statistics and when it is assumed noise-additive with unknown finite second-order moment.
\end{abstract}

\begin{IEEEkeywords}
	Mismatched decoding, information rates, error exponents, mismatched compound channel.
\end{IEEEkeywords}


\section{Introduction}
Standard results in information theory have been derived assuming that the channel transition probability law is perfectly known to the system designer as well as to the transmitter and receiver \cite{GallagerWiley1968}. This assumption is too optimistic in practice since there are several environments where the channel is either unknown or difficult to estimate accurately. Therefore, devising the fundamental limits for reliable transmission under channel uncertainty becomes imperative \cite{lapidoth1998reliable,370120, Foundations}. 

Most practical receivers aim at estimating the channel law and use this estimate for decoding as if it were perfect.
Mismatched decoding (see e.g. \cite{Foundations} and references therein), studies precisely the problem of reliable communication with a fixed decoding metric not necessarily equal to the channel. Beyond channel uncertainty, mismatched decoding encompasses a number of important problems such as bit-interleaved coded modulation, finite-precision arithmetic and zero-error communication \cite{Foundations}.

This manuscript delves into channel uncertainty by introducing a  parameterization for the mismatched compound channel \cite[Ch. 2.2.4]{Foundations}. We model channel uncertainty as a compound channel class comprising all channels that are contained within a small radius relative entropy ball centered at the known decoding metric. Accurate approximations to achievable information rates and error exponents are derived using i.i.d. and constant-composition codes for discrete memoryless channels (DMC), and i.i.d. and cost-constrained codes for continuous-alphabet channels. The key tool used is the approximation of relative entropy adopted by Borade and Zheng in \cite{EuclideanInformationTheory}. This, combined with convex optimisation \cite{ConvexOptimization} and variational calculus \cite{gelfand2000calculus} tools, provides closed-form results with a strong conceptual understanding of the sensitivity to small channel uncertainties.

\subsection{Notation}
Vectors are indicated in boldface and the corresponding $i$-th entry is written using a subscript, as $\xv$ and $x_i$. We denote the transpose of $\xv$ by $\xv^T$.
Random variables are denoted in uppercase. The expectation and the variance of random variable $X$  under probability distribution $\QX$ are denoted respectively by $\EX_{\QX}[X]$ and $\VAR_{\QX}[X]$, or simply $\EX[X]$ and $\VAR[X]$; emphasis on $\QX$ will be made when ambiguity arises.
The cardinality of alphabet $\Xc$ is indicated as $|\Xc|$. The relative entropy and chi-squared distance between probability distributions $P$ and $P'$ are $\DKL(P\| P')$ and $\mathcal{X}^2(P, P')$, respectively. Throughout the paper, natural logarithms are adopted.

\subsection{Problem Setup} \label{sec:Setup}
Consider reliable transmission of $M$ messages over a DMC with input $X$ and output $Y$, taking values from discrete and finite alphabets $\Xc$ and $\Yc$, respectively.
The input distribution is denoted by $\QX(x)=\Pr[X=x]$ for all $x\in\Xc$ and the channel probability distribution is $W(y|x) = \Pr[Y=y | X=x]$ for all pairs $(x, y)\in\Xc\times\Yc$.
For transmission, the encoder selects a message $m \in \{1,\dots, M\}$ and transmits the corresponding $n$-symbol codeword $\xv^{(m)}=(x_1^{(m)}, \dotsc, x_n^{(m)})$  from the codebook $\Cc_n = \{\xv^{(i)}\}_{1\leq i \leq M}$. Upon observing $\yv\in\Yc^n$, the decoder estimates the transmitted message, inspired by the maximum-likelihood decoder but using a possibly suboptimal decoding metric $q^n(\xv, \yv)$, as%
\begin{equation}
	\widehat{m} = \argmax_{1\leq \bar{m} \leq M}\ q^n(\xv^{(\bar{m})} \!, \yv).%
\end{equation}
When $q^n(\xv,\yv) \equiv \prod_{i=1}^{n} W(y_i|x_i)$, the decoder is the optimal maximum-likelihood (ML) decoder and is commonly referred to as matched. The symbol $\equiv$ denotes equivalence in decoding metric in the sense of \cite[Propositions 2.1 and 2.2]{Foundations}. In any other case, the decoder is said to be mismatched. An error is declared when $\widehat{m} \neq m$, and the probability of error for $\Cc_n$ is defined as $p_e(\Cc_n) = \Pr[\widehat{m}\neq m]$.
In this work, we will work with product metrics for which we adopt the symbol metric notation $q(x, y)$ and
\begin{equation}
	\widehat{m} = \operatorname*{argmax}_{1\leq \bar{m} \leq M}\ \prod_{i=1}^n q(x^{(\bar{m})}_i \!, y_i).%
\end{equation}
After taking logarithms, these become additive metrics.

We consider the relevant case in practice where the decoding metric $q(x,y) = \widehat{W}(y|x)>0$ is a channel estimate $\widehat{W}(y|x)$ from the output of a channel estimator.
We analyze achievable rates and error exponents for an unknown channel $W(y|x)$ and a mismatched decoder that uses  $\widehat{W}(y|x)$ as if it were perfect, and impose a constraint on the level of mismatch between the estimated and true channels by defining an appropriate statistical distance. 
More concretely, let $d(P, P^\prime)$ be a statistical distance between the joint probability distributions $P$ and $P^\prime$ over the alphabet $\Xc {\times} \Yc$. Then we define \emph{small mismatch} as the class of channels \cite{MismatchedHypTesting} such that%
\begin{equation}\label{eq:ball}
	W \in \big\{P_{Y|X} \colon d(\QX P_{Y|X}, \QX\widehat{W}) \le r \big\}.
\end{equation}
That is, $W$ is within a ball centered at $\widehat W$ of radius $r$, assumed to be small.
This class of channels is useful for modeling small perturbations of the channel around the decoding metric due to good, yet imperfect estimation.

\subsection{Overview of Achievable Rates}
A number of achievable rates have been derived in the literature for DMCs and continuous channels with an arbitrary fixed decoding metric $q(x,y) >0$ (see e.g. \cite{Foundations} for a recent survey).
The GMI (generalized mutual information) $I_{\text{GMI}}$ \cite{kaplan1993ira} is the achievable rate obtained with i.i.d. random codes. Instead, employing constant-composition codes for DMCs or cost-constrained codes with a single per-codeword cost constraint for arbitrary alphabets yields the LM (lower mismatch) rate $I_{\text{LM}}$ \cite{Hui83,CsiszarKorner81graph}.
In this work, we focus on the dual expressions of the aforementioned rates defined from the \emph{mismatched information density}%
\begin{equation}
\label{eqn:i_sa_def_generic}
i_{s, a}(x, y) \triangleq \log \frac{q(x,y)^s  e^{a(x)}}{\EX_{\QX} [q(X, y)^s e^{a(X)}]}
\end{equation}
where $s\ge0$ and $a(x) \in \mathbb{R}$ for $x\in\Xc$. Therefore, as \cite[Ch. 2.3.1, Ch. 2.3.2]{Foundations}%
\begin{align}
I_{\text{GMI}}(\QX) &\triangleq\ \operatorname*{sup}_{s \geq 0}\ \EX_{\QX{\times}W} \big[i_{s, 0}(X, Y)\big]  \label{eqn:GMIDualExpression}\\
I_{\text{LM}}(\QX) &\triangleq\! \operatorname*{sup}_{s \ge 0, a(\cdot)} \EX_{\QX{\times}W} \big[  i_{s,a}(X, Y) \big], \label{eqn:LMDualExpression}
\end{align}
where the notation $i_{s, 0}(x, y)$ in \eqref{eqn:GMIDualExpression} corresponds to setting $a(x) = 0$ in \eqref{eqn:i_sa_def_generic} for every $x\in\Xc$. Indeed, the GMI rate can be obtained from the LM rate by setting the function $a(x)$ to zero for every input.
The GMI and LM rates are tight with respect to their respective ensemble of codes. Thus, the weakness in rate relative to the mismatch capacity $C_{\text{M}}$, the supremum of all achievable rates, is due to an inherent weakness of employed codes rather than loose mathematical analysis. Hence $I_{\text{GMI}}(\QX) \le I_{\text{LM}}(\QX) \le C_{\text{M}}$, with equality in specific cases \cite{UpperboundEhsan, MulticastAnelia}.

\subsection{Overview of Error Exponents}
\label{RandomCodingErrorExponents}

An error exponent $E(R)$ is said to be achievable at rate $R$ whenever
\begin{equation}
    \label{eqn:probability_of_error_bound}
    E_{\text{r}}(R) \le \operatorname*{lim}_{n \to \infty} - \frac{1}{n} \log \bar{p}_e(n, \lfloor e^{nR} \rfloor),
\end{equation}
where $\bar{p}_e$ is the average error probability over all randomly generated codes of a specific ensemble. If \eqref{eqn:probability_of_error_bound} holds with equality, $E_{\text{r}}(R)$ is said to be the ensemble-tight exponent \cite[Ch. 7.1]{Foundations}. For ML decoding,
\begin{align} \label{eqn:RCErrorExponent}
    E_{\text{r}}(\QX, R) &\triangleq \operatorname*{max}_{\rho \in [0, 1]} E_0(\QX, \rho) - \rho R
\end{align}
is the (ensemble-tight) random coding exponent \cite{GallagerWiley1968} achievable at rate $R$ with i.i.d and constant-composition codes, where the respective Gallager $E_0$ functions are defined from%
\begin{equation}
	\varepsilon_{a, \rho}(x, y) \triangleq \bigg( \frac{ \EX_{\QX}[W(y|X)^{\frac{1}{1+\rho}}e^{a(X)}]}{W(y|x)^{\frac{1}{1+\rho}} e^{a(x)}} \bigg)^{\rho}%
\end{equation}
as%
\begin{align} \label{eqn:MatchediidGallagerFunction}
E_0^{\text{iid}}(\QX, \rho) &\triangleq - \log \EX_{\QX \times W} [\varepsilon_{0, \rho}(X, Y) ]\\
\label{eqn:MatchedccGallagerFunction}
E_0^{\text{cc}}(\QX, \rho) &\triangleq \sup_{a(\cdot)} -\EX_{\QX} \big[\! \log \EX_{W} [\varepsilon_{a, \rho}(X, Y)|X ] \big]. %
\end{align}
It is known that the error exponent $E_{\text{r}}(\QX, R)$ is positive for rates $R < \frac{\partial E_0(\QX, \rho)}{\partial\rho}\big|_{\rho=0}=I(X;Y)$, proving the achievability part of the channel coding theorem. Moreover, $E_0^{\text{cc}}(\QX, \rho)\geq E_0^{\text{iid}}(\QX, \rho)$ with equality only for the respective optimal input distributions \cite[Eq. (52)]{6763080}.

Analogous to the matched case, the ensemble-tight mismatched random coding error exponents \cite[Th. 7.1]{Foundations} of i.i.d.\ and constant-composition codes are defined using%
\begin{equation}\label{eqn:e_s_a_def}
	\varepsilon_{s, a, \rho}(x, y) \triangleq  \bigg( \frac{\EX_{\QX}[q(X,y)^s e^{a(X)}]}{q(x,y)^s e^{a(x)}} \bigg) ^\rho
\end{equation}
to obtain the following dual expressions \cite[Ch. 7.2]{Foundations}:%
\begin{align}
    E_\text{r}^{\text{iid}}(\QX, R) &\triangleq \operatorname*{max}_{\rho \in [0,1]} E_0^{\text{iid}}(\QX, \rho) - \rho R \label{eqn:ErIID}\\
    E_\text{r}^{\text{cc}}(\QX, R) &\triangleq \operatorname*{max}_{\rho \in [0,1]} E_0^{\text{cc}}(\QX, \rho) - \rho R,\label{eqn:ErCC}
\end{align}
where the respective mismatched Gallager functions are%
\begin{align} 
E_0^{\text{iid}}(\QX, \rho) &\triangleq\ \sup_{s \ge 0}\ -\log \EX_{\QX{\times}W} [\varepsilon_{s, 0, \rho}(X, Y)] \label{eqn:E0IID}\\
E_0^{\text{cc}}(\QX, \rho) &\triangleq\! {\sup_{s \ge 0, a(\cdot)}} \hspace{-0.25em}{-}\EX_{\QX} \big[\! \log \EX_{W} [\varepsilon_{s, a, \rho}(X, Y)|X] \big]{.} \label{eqn:E0CC}
\end{align}
For simplicity, we have used the same notation for the mismatched error exponents as for the matched case.

An alternative expression for $E_0^{\text{cc}}$ can be formulated by observing that constant-composition and cost-constrained coding achieve the same exponent when two cost functions $c_1(\cdot)$ and $c_2(\cdot)$ are optimized \cite[Th. 4]{6763080}. This is%
\begin{align}\label{eqn:E0Cost}
	E_0^{\text{cc}}(\QX, \rho) &=\! \sup_{c_1\!(\cdot), c_2\!(\cdot)} E_0^{\text{cost}}(\QX, \rho, \{c_1, c_2\})
\end{align}
where%
\begin{align}
	E_0^{\text{cost}}(\QX, \rho, \cv) &\triangleq\! \sup_{s\ge0, \boldsymbol{\lambda}, \bar{\boldsymbol{\lambda}}} \hspace{-0.5em} -\log \EX_{\QX\times W} \big[ \varepsilon_{s,\lv, \bar{\lv}, \rho}^\text{cost}(X, Y) \big]\\
	\varepsilon_{s, \lv, \bar{\lv},\rho}^{\text{cost}}(x, y) &\triangleq \bigg( \frac{\EX_{\QX}[q(X, y)^s e^{\bar\lv^{\!T\!} (\cv(X)-\pv)}] }{q(x, y)^s e^{{\lv}^{\!T\!} (\cv(x)-\pv)}} \bigg)^\rho%
\end{align}
and $\cv(x)\triangleq[c_1(x), c_2(x)]^T$, $\lv\triangleq[\lambda_1, \lambda_2]^T$, $\bar{\lv}\triangleq[\bar{\lambda}_1, \bar{\lambda}_2]^T$ and $\boldsymbol{\phi}\triangleq[\EX[c_1(X)], \EX[c_2(X)]]^T$ defined in vector form for ease of notation.
We are more interested in the expression \eqref{eqn:E0Cost} for the mismatched $E_0$ function of constant-composition codes, as it adopts a similar form to that of i.i.d. codes, leading to tighter derivations. In addition, cost-constrained coding remains valid for continuous alphabets.
When a matched ML decoder $q(x, y) = W(y|x)$ is used, the matched Gallager $E_0$ functions \eqref{eqn:MatchediidGallagerFunction}--\eqref{eqn:MatchedccGallagerFunction}, are recovered from \eqref{eqn:E0IID}--\eqref{eqn:E0Cost} by setting $s=(1 + \rho)^{-1}$, hence recovering the matched error exponents.
Using Jensen's inequality \cite[Ch. 2]{ElementsOfInformationTheory}, it can be shown that for any pair $(W, q)$%
\begin{equation} \label{eqn:EriidLessThanErcc}
	E_\text{r}^{\text{iid}}(\QX, R)\le E_\text{r}^{\text{cc}}(\QX, R) .%
\end{equation}

\subsection{Contributions and Paper Outline}
Existing work on mismatched decoding typically only considers channel uncertainty at the receiver for an arbitrary channel metric pair. This paper explores achievable rates and random coding error exponents under channel uncertainty for a class of channels that are close to the decoding metric, modeling the case of accurate yet imperfect channel estimation.
This scenario is described as the compound channel \cite{669134, 7464362}, where the only knowledge available is the class of channels over which the system will operate.
The novelty of this paper lies in analyzing the compound channel under a given mismatched decoding metric. We parameterize the compound channel class by all the channels that lie within a relative entropy ball of radius $r$. 
Our primary contribution is the sensitivity analysis of achievable rates and random coding error exponents to small values of $r$, capturing the effect of accurate but imperfect channel estimation. We also assess the boundary between matched and mismatched decoding, providing novel closed-form results for symmetric decoding metrics and the nearest neighbor decoder. In particular:%
\begin{enumerate}[1)\IEEElabelindent=0em \labelsep=1pt]
	\item In Section \ref{AchievableRatesUnderSmallMismatch}, we derive worst-case information rates for i.i.d. and constant-composition random codes. We derive an expansion of the worst-case information rates for small mismatch $r$ and show that the mutual information of the estimated channel cannot be surpassed. Our expansion shows second-order terms proportional to $\sqrt{r}$, inducing an achievable rate penalty that decays with an infinite slope with respect to the mutual information of the estimated channel achieved at $r=0$. Examples of symmetric and modulo-additive decoding metrics are presented, demonstrating that the worst-case channel within the ball shares the same structure as the decoding metric.
	
	\item In Section \ref{ErrorExponentsUnderSmallMismatch}, we derive worst-case error exponents for i.i.d. and constant-composition random codes. We provide the corresponding expansions with second-order penalty terms again proportional to $\sqrt{r}$. Examples for the same symmetric and modulo-additive decoding metrics used for achievable rates are provided, yielding similar results.
	
	\item In Section \ref{ContinuousChannels}, we extend previous analyses allowing for continuous input and output alphabets. We provide expansions for the worst-case achievable rates and error exponents for i.i.d. and cost-constrained random coding with multiple auxiliary costs.
	The results are particularized to Gaussian codebooks with nearest neighbor decoding \cite{532892}. In particular, we derive a second-order expansion of the worst-case achievable rates for arbitrary channels, not necessarily additive, in a small-radius relative entropy ball around the nearest neighbor decoding metric.
	We also show a significantly smaller penalty when the channels within the ball are assumed to be additive.
	The structure of the worst-case channels is also studied, highlighting the discontinuities observed when the estimated channel has unbounded support.
\end{enumerate}
Conclusions are drawn in Section \ref{Conc}. Proofs are found in the Appendices.

\section{Small Mismatch: A Class of Compound Channels} \label{DefSmallMismatch}
We study a class of mismatched compound channels for which DMC $W(y|x)$ is unknown but assumed to lie within a divergence ball $\mathcal{B}$ centered on $\widehat{W}(y|x)$, a conditional probability distribution that models an estimate of $W(y|x)$ from the output of a channel estimator.
The decoder uses it as the decoding metric, i.e., $q(x, y) = \widehat{W}(y|x)>0$.

\begin{definition} \label{def:small_mismatch}
For small mismatch (ball radius) between the true and estimated channels $W$ and $\widehat{W}$, we require that%
\begin{equation} \label{eqn:WorstCaseRatesConstraint}
W \in \mathcal{B}(\widehat{W}\!, r) \triangleq \{P_{Y|X} \colon \DKL(\widehat{W} \| P_{Y|X} | \QX) \le r \}
\end{equation}
holds for small $r$, where $\DKL(\widehat{W} \| P_{Y|X} | \QX)$ is the conditional relative entropy%
\begin{equation}
	\DKL(\widehat{W} \| P_{Y|X} | \QX) \triangleq \sum_{x, y} \QX(x) \widehat{W}(y|x)\! \log\! \frac{\widehat{W}(y|x)}{P_{Y|X}(y|x)}.
\end{equation}
Eq. \eqref{eqn:WorstCaseRatesConstraint} is equivalent to setting $\DKL(\QX\widehat{W} \| \QX P_{Y|X}) \leq r$.
\end{definition}
We have adopted a decoder-centric perspective in which the ball is centered around the known quantity, i.e. the estimated channel $\widehat{W}$ employed to decode.
One of the advantages of this formulation is that for sufficiently small mismatch $r$ we can rely on \cite[Eqs. (1)--(4)]{EuclideanInformationTheory} and apply a Taylor expansion  to the logarithm in the relative entropy and group the $o(\cdot)$ terms. This allows us to express the relative entropy as a function of%
\begin{equation} \label{eqn:theta_def}
	\theta(y|x) \triangleq P_{Y|X}(y|x) - \widehat{W}(y|x)%
\end{equation}
plus a term of lesser significance, as%
\begin{align}\label{eqn:DKLapproximation}
	\DKL(\widehat{W}\| P_{Y|X}|\QX) = &\ \frac{1}{2} \sum_{x,y} \QX(x) \frac{\theta(y|x)^2}{\widehat{W}(y|x)} \nonumber\\
	&+ o\bigg(\frac{1}{2}\sum_{x, y}\! \QX(x) \frac{\theta(y|x)^2}{\widehat{W}(y|x)} \bigg).
\end{align}
Observe that the first term of \eqref{eqn:DKLapproximation} is exactly the chi-squared distance $\frac12\mathcal{X}^2 ( \QX P_{Y|X}\!, \QX \widehat{W})$ as defined in \cite[Pag. 425]{BoundsMeasures}.
Vectorizing $\theta$ allows us to express \eqref{eqn:DKLapproximation} in a more convenient form to obtain closed form results.
Besides this, there are other statistical distances to consider, such as generic $f$-divergences \cite{FDivergence, FDivergence2} or the R\'enyi divergence of order $\alpha$ \cite{RenyiEntropy}. Here, we only consider relative entropy, an $f$-divergence with $f^{\prime\prime}(1) = 1$ ($f^{\prime\prime}$ stands for the second derivative of $f$). Our results can be easily extended to a generic $f$-divergence or R\'enyi entropy by substituting the $\tfrac{1}{2}$ by the constant factor $\tfrac{f^{\prime\prime}(1)}{2}$ for $f$-divergences or $\tfrac{\alpha}{2}$ for the R\'enyi divergence. 

\section{Achievable Rates} \label{AchievableRatesUnderSmallMismatch}
In this section, we derive the worst-case LM and GMI rates.
For ease of explanation, we focus on the worst-case LM rate as the corresponding GMI can be obtained straightforwardly by setting $a(x)=0$ for all $x\in\Xc$.
We rewrite the mismatched information density $i_{s, a}$ in \eqref{eqn:i_sa_def_generic} with $q(x, y)=\widehat{W}(y|x)$ as%
\begin{equation}\label{eqn:i_sa_def}
i_{s, a}(x, y) = \log \frac{\widehat{W}(y|x)^s  e^{a(x)}}{\EX_{\QX} [\widehat{W}(y|X)^s e^{a(X)}]}.%
\end{equation}
The worst-case rate problem we wish to solve is%
\begin{align}\label{eqn:WorstCaseRateFormulation}
\underline{I}_{\text{\normalfont LM}}(\QX, \widehat{W}\!, r) &= \operatorname*{min}_{W \in \mathcal{B}} \operatorname*{sup}_{s \ge 0, a(\cdot)} \EX_{\QX{\times}W} [ i_{s,a}(X, Y)]
\end{align}
where the minimization is over all valid conditional probability distributions in the relative entropy ball $\mathcal{B}$ defined in \eqref{eqn:WorstCaseRatesConstraint}. Since the true channel is unknown, problem \eqref{eqn:WorstCaseRateFormulation} finds the channel in $\mathcal{B}$ that gives the worst possible LM rate. This gives an indication of the rate loss incurred by accurate but not perfect channel estimation.

Channels $W \in \mathcal{B}$ need to be valid conditional probability distributions, i.e., 
\begin{align} \label{eqn:Positivity1}
	W(y|x) \geq 0 &\quad \text{for} \quad (x,y) \in \mathcal{X}\times \mathcal{Y}\\ \label{eqn:Positivity2}
	\sum_{y} W(y|x) = 1 &\quad \text{for} \quad x \in \mathcal{X}.
\end{align}
The constraint $W(y|x) \leq 1$ for $(x,y) \in \mathcal{X}\times \mathcal{Y}$ is guaranteed by construction when \eqref{eqn:Positivity1}--\eqref{eqn:Positivity2} are satisfied.
Next, observe that imposing \eqref{eqn:Positivity1} in problem \eqref{eqn:WorstCaseRateFormulation} is unnecessary since the relative entropy in $\eqref{eqn:WorstCaseRatesConstraint}$ implies that positivity already for a strictly positive $\widehat{W}$.

\begin{theorem}\label{theorem:WorstCaseLMRate}
    Consider a class of DMCs $W(y|x)$, a mismatched decoder based on the channel estimate $\widehat{W}(y|x)$ and a fixed input distribution $\QX(x)$ satisfying \eqref{eqn:WorstCaseRatesConstraint}. Then, for a sufficiently small $r \ge 0$, the worst-case LM and GMI rates can be expressed as
    \begin{align}\label{eqn:FinalExpressionLMSmallMismatch}
    \underline{I}&_{\text{\normalfont LM}}(\QX, \widehat{W}\!, r) = \nonumber \\
    &{\operatorname*{sup}_{s\ge0, a(\cdot)}} \bigg\{ I_{s,a}^{\text{\normalfont ML}}(\QX, \widehat{W}) - \sqrt{2r V_{s,a}(\QX, \widehat{W}) } + o(\sqrt{r}) \bigg\} \\
    \underline{I}&_{\text{\normalfont GMI}}(\QX, \widehat{W}\!, r) = \nonumber\\
    &\ \operatorname*{sup}_{s\ge0} \bigg\{ I_{s,0}^{\text{\normalfont ML}}(\QX, \widehat{W}) - \sqrt{2r V_{s,0}(\QX, \widehat{W}) } + o(\sqrt{r}) \bigg\}
    \end{align}
    where%
    \begin{align}
        \label{eqn:AchievableRateMLDecodingLMRate}
        I^\text{\normalfont ML}_{s,a}(\QX, \widehat{W}) &\triangleq \EX_{\QX{\times}\widehat{W}} \big[i_{s,a}(X, Y)\big]\\
        V_{s,a}(\QX, \widehat{W}) &\triangleq \EX_{\QX} \big[ \VAR_{\widehat{W}}[i_{s,a}(X, Y) | X] \big].%
    \end{align}
\end{theorem}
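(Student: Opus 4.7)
The plan is to handle the LM statement (the GMI one follows by specializing $a(\cdot)\equiv 0$). For fixed $s\ge 0$ and $a(\cdot)$, I would first solve the inner problem
\[
\underline{I}_{s,a}(\QX,\widehat W, r) \triangleq \min_{W\in\mathcal B(\widehat W,r)} \EX_{\QX\times W}[i_{s,a}(X,Y)],
\]
and only at the end swap the order and take the supremum. The minimax swap is justified because, for each fixed $W$, the objective is concave (in fact linear) in the primal parameter $(s,a)$ and, for each fixed $(s,a)$, it is linear in $W$ over the convex set $\mathcal B(\widehat W,r)$; standard saddle-point arguments then allow the exchange, and the resulting $\sup$--$\min$ equals the $\min$--$\sup$ we started from, up to $o(\sqrt r)$.

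Next, I would linearize around $\widehat W$. Write $W(y|x)=\widehat W(y|x)+\theta(y|x)$; the simplex constraints \eqref{eqn:Positivity1}--\eqref{eqn:Positivity2} translate to $\sum_y \theta(y|x)=0$ for every $x\in\Xc$, while positivity is automatic for small enough $r$ since, as will be seen, $\theta=O(\sqrt r)$ and $\widehat W>0$. Using the chi-squared approximation \eqref{eqn:DKLapproximation} of the relative-entropy ball, the radius constraint becomes
\[
\tfrac12\sum_{x,y}\QX(x)\frac{\theta(y|x)^2}{\widehat W(y|x)}\le r + o(r).
\]
The objective decomposes as
\[
\EX_{\QX\times W}[i_{s,a}(X,Y)] = I_{s,a}^{\text{ML}}(\QX,\widehat W)+\sum_{x,y}\QX(x)\,\theta(y|x)\,i_{s,a}(x,y),
\]
so the inner problem reduces to a quadratically constrained linear program in $\theta$: minimize a linear functional subject to a quadratic (chi-squared) inequality and the linear equality constraints $\sum_y\theta(y|x)=0$.

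I would then solve this QCLP in closed form via Lagrange multipliers. Introducing a multiplier $\mu\ge 0$ for the quadratic constraint and multipliers $\nu(x)$ for the equality constraints, the stationarity conditions give
\[
\theta^\star(y|x)= -\frac{\widehat W(y|x)}{\mu}\bigl(i_{s,a}(x,y)-\bar\imath_{s,a}(x)\bigr),
\]
where $\bar\imath_{s,a}(x)\triangleq\EX_{\widehat W}[i_{s,a}(X,Y)\mid X=x]$ is exactly what the equality constraint forces (it centers $i_{s,a}$ in the $y$-direction under $\widehat W$). Substituting into the active chi-squared constraint pins $\mu^{-2}$ so that the quadratic form equals $2r$, i.e.\ $\mu^{-1}=\sqrt{2r/V_{s,a}(\QX,\widehat W)}$, and substituting back into the linear term yields
\[
\sum_{x,y}\QX(x)\,\theta^\star(y|x)\,i_{s,a}(x,y) = -\sqrt{2r\,V_{s,a}(\QX,\widehat W)}+o(\sqrt r),
\]
where I used the fact that the inner sum telescopes to a variance thanks to the mean-removing term $\bar\imath_{s,a}(x)$. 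Taking the supremum over $(s,a)$ and invoking the minimax exchange finishes the proof.

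The main obstacle, I expect, is controlling the $o(\sqrt r)$ remainder uniformly enough that the exchange of $\sup$ and $\min$ is legitimate and that the optimal $(s^\star(r),a^\star(r))$ do not blow up as $r\to 0$: one must argue that the optimizers stay in a compact neighborhood of the $r=0$ optimizers of $I_{s,a}^{\text{ML}}$, so that the higher-order terms in \eqref{eqn:DKLapproximation} and the positivity constraint on $W$ can both be absorbed into $o(\sqrt r)$. A secondary technicality is the degenerate case $V_{s,a}(\QX,\widehat W)=0$, where the linear term vanishes and the penalty is trivially $0$; this must be treated separately to avoid dividing by zero in the Lagrangian step.
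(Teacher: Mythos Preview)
Your proposal is correct and follows essentially the same route as the paper's proof in Appendix~\ref{ProofLMRate}: swap $\min$ and $\sup$ via a minimax argument, linearize the objective around $\widehat W$ in terms of $\theta=W-\widehat W$, replace the relative-entropy constraint by its chi-squared approximation, and solve the resulting linearly-objective, quadratically-constrained problem via Lagrange multipliers to obtain the $-\sqrt{2rV_{s,a}}$ penalty. One small slip: the objective is concave in $(s,a)$ but not linear (the $\log\EX_{\QX}[\widehat W(y|X)^s e^{a(X)}]$ term is a log-sum-exp), though concavity is all the minimax step needs.
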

\begin{proof}
	The full proof is provided in Appendix \ref{ProofLMRate}, where we minimize the dual expression for the LM rate \eqref{eqn:WorstCaseRateFormulation} including the $o(\cdot)$ term in \eqref{eqn:DKLapproximation} as a penalty in the objective function. This gives an accurate approximation to relative entropy \cite{EuclideanInformationTheory} and an accurate approximation on $\underline{I}_{\text{LM}}$ as $r\to 0$. The worst-case GMI rate is found by setting $a(x)=0$ for $x\in\Xc$.
\end{proof}

The following comments are in order from inspection of problem \eqref{eqn:WorstCaseRateFormulation}, Theorem \ref{theorem:WorstCaseLMRate} and its proof in Appendix \ref{ProofLMRate}:%
\begin{enumerate}[1)\IEEElabelindent=0em \labelsep=1pt]
	\item \textit{The solution for a chi-squared distance ball.}
	Let the small mismatch between the true and estimated channels $W$ and $\widehat{W}$ be characterized by a small radius chi-squared distance ball as%
	\begin{equation} \label{eqn:WorstCaseRatesConstraint2}
		W\! \in \tilde{\mathcal{B}}(\widehat{W}\!, r) \triangleq \big\{  P_{Y|X} \colon \!\tfrac{1}{2}\mathcal{X}^2( \QX P_{Y|X}\!, \QX \widehat{W}) {\le} r \big\}%
	\end{equation}
	where $\tfrac{1}{2}\mathcal{X}^2 ( \QX P_{Y|X}\!, \QX \widehat{W})$ as defined in \cite[Pag. 425]{BoundsMeasures} corresponds to the first term in \eqref{eqn:DKLapproximation}.	
	Then, the worst-case LM and GMI rates for sufficiently small $r\geq 0$ are exactly%
	\begin{align}
		\underline{\tilde{I}}_{\text{\normalfont LM}}&(\QX, \widehat{W}\!, r) = \nonumber\\
		&\operatorname*{sup}_{s\ge0, a(\cdot)} \bigg\{ I_{s,a}^{\text{\normalfont ML}}(\QX, \widehat{W}) - \sqrt{2r V_{s,a}(\QX, \widehat{W}) } \bigg\} \label{eqn:approxLMRate} \\	\underline{\tilde{I}}_{\text{\normalfont GMI}}&(\QX, \widehat{W}\!, r) = \nonumber \\
		&\operatorname*{sup}_{s\ge0} \bigg\{ I_{s,0}^{\text{\normalfont ML}}(\QX, \widehat{W}) - \sqrt{2r V_{s,0}(\QX, \widehat{W})} \bigg\} \label{eqn:approxGMI}
	\end{align}
	achieved for every fixed $s, a(x)$ by the channel distribution%
	\begin{align} 
		\tilde{W}_{\!s,a}^\ast(y|x) &= \widehat{W}(y|x) \Big( 1 - \sqrt{2r} \cdot \varphi_{s, a}(x, y) \Big) \label{eqn:optTrueWLM}
	\end{align}
	with%
	\begin{equation}\label{eqn:phi_def}
		\varphi_{s, a}(x, y) \triangleq \frac{i_{s,a}(x, y) - \EX_{\widehat{W}} [ i_{s,a}(x, Y) ] }{\sqrt{V_{s, a}(\QX, \widehat{W})}}.%
	\end{equation}	
	The worst-case channels $\tilde{W}_{\text{\normalfont LM}}^\ast, \tilde{W}_{\text{\normalfont GMI}}^\ast$ are found after maximizing \eqref{eqn:approxLMRate}--\eqref{eqn:approxGMI} over $s, a(x)$.	
	The key observation here is that the difference in the worst-case achievable rate for the mismatched compound channel with relative entropy or chi-squared distance balls is merely a little-$o$ term.
	
	\item \textit{Feasibility of worst-case channels}.
	In our proof, by shifting the $o(\cdot)$ term from the constraint to the objective function as a penalty term, we derive a more convenient and tractable mathematical analysis that yields valuable insights. However, one consequence of this shift is that the worst-case channels may have negative values, though this effect is absent for sufficiently small $r$.
	An additional condition on the ball radius $r$ must be satisfied to guarantee non-negativeness. From $1 - \sqrt{2r} \cdot \varphi_{s, a}(x, y) > 0$ for all $(x,y) \in \Xc {\times} \Yc$, we have:%
	\begin{equation}\label{eqn:r_pos}
		r < \min_{(x,y) \colon \varphi_{s,a}(x, y)>0}\ \frac{1}{2 \varphi_{s,a}(x, y)^2}.%
	\end{equation}
	$\tilde{W}_{\!s,a}^\ast(y|x) \leq 1$ for $(x,y) \in \Xc{\times}\Yc$ is satisfied by construction as \eqref{eqn:r_pos} guarantees non-negativity and $\sum_y\! \tilde{W}_{\!s,a}^\ast(y|x) = 1$.
	It is easy to show that the right-hand side of \eqref{eqn:r_pos} produces a positive value. We simply need to show that $\varphi_{s,a}(x, y)$ is finite everywhere $\widehat{W}(y|x)>0$, which can be easily shown by referring to equation \eqref{eqn:phi_def}. The only situation in which this condition would not be fulfilled is if $V_{s,a}=0$, which occurs only when $\widehat{W}(y|x)=|\Yc|^{-1}$, and for which $\underline{\tilde{I}}_{\text{\normalfont LM}}=0$. Therefore, it exists a sufficiently small $r$ that fulfills \eqref{eqn:r_pos}.
	Indeed, in all practical settings we have considered, the bound \eqref{eqn:r_pos} exceeds the ball radius of interest for the optimal $s,a(x)$ corresponding to the worst-case channels.
	
	\item \textit{The solution.} The worst-case channels are always found at the border of the respective ball, i.e., when \eqref{eqn:WorstCaseRatesConstraint} or \eqref{eqn:WorstCaseRatesConstraint2} holds with equality.
	
	\item \textit{Expansion term.} The expansions yield the same penalty term proportional to $\sqrt{r}$, the square root of the ball radius. These penalties indicate that even a small mismatch causes a severe penalty on the achievable rate.
	
	\item \label{Th1:5} \textit{Achievable rate at $r=0$.} For a fixed $\widehat{W}$, $\underline{I}_{\text{\normalfont LM}}$ \eqref{eqn:FinalExpressionLMSmallMismatch} is upper bounded by the mutual information of the estimated channel $\widehat{W}$ with input $\QX$, $I_{\text{\normalfont MI}}(\QX, \widehat{W})$. This is because  $\widehat{W} \in \mathcal{B}$ regardless of any $r\geq0$, so the minimum in \eqref{eqn:WorstCaseRateFormulation} can be upper bounded by evaluating at $W=\widehat{W}$. Hence%
	\begin{align} \label{eqn:FinalExpressionLMSmallMismatch0_1}
		\underline{I}_{\text{\normalfont LM}}(\QX, \widehat{W}\!, r)  &\leq \sup_{s \ge 0, a(\cdot)} I_{s,a}^{\text{\normalfont ML}}(\QX, \widehat{W})\\ \label{eqn:FinalExpressionLMSmallMismatch0_2}
		&= I_{\text{\normalfont MI}}(\QX, \widehat{W}),
	\end{align}
	and thus, information rates surpassing $I_{\text{\normalfont MI}}(\QX, \widehat{W})$ cannot be achieved. \eqref{eqn:FinalExpressionLMSmallMismatch0_1} is tight at $r=0$, where $\mathcal{B}=\{\widehat{W}\}$ and the supremum is achieved for $s=1$ and $a(x) = 0$ for $x\in \Xc$ \cite[Sec. IV-C]{6763080}. The same conclusions can be extracted for $\underline{\tilde{I}}_{\text{\normalfont LM}}$ defined in \eqref{eqn:approxLMRate}.
	
	\item \textit{Input-maximizing distribution.} Analysis of  \eqref{eqn:FinalExpressionLMSmallMismatch} reveals that although the first term is concave in $\QX$, the second term is not, though it is comparatively much smaller than the first, especially when $r\rightarrow0$. In this regime, $s=1$ and $a(x)=0$ are the optimal values, and therefore%
	\begin{align}
		\max_{\QX} \underline{I}_{\text{\normalfont LM}} \approx I_{1,0}(\QX^\ast, \widehat{W}) - \sqrt{2rV_{1,0}(\QX^\ast, \widehat{W})}
	\end{align} 
	where $I_{1,0}(\QX, \widehat{W}) = I_{\text{\normalfont  MI}} (\QX, \widehat{W})$ and%
	\begin{align}\label{eqn:QXoptdist}
		\QX^{\ast} \triangleq \argmax_{\QX} I_{\text{\normalfont  MI}} (\QX, \widehat{W})%
	\end{align}
	is the capacity-achieving distribution for channel $\widehat{W}$, which can be computed in a complexity efficient way via the Blahut-Arimoto algorithm \cite{1054753}. The approximation becomes tighter as $r\rightarrow0$, and holds with equality only at $r=0$, in which case%
	\begin{align}
		\max_{\QX} \underline{I}_{\text{\normalfont LM}}(\QX, \widehat{W}\!, 0) = \max_{\QX} I_{\text{\normalfont  MI}} (\QX, \widehat{W}).
	\end{align}
	The same conclusions can be extracted for $\underline{\tilde{I}}_{\text{\normalfont LM}}$ in \eqref{eqn:approxLMRate}.
	
	\item \label{Th1:7}\textit{Lower bound on $\underline{\tilde{I}}_{\text{\normalfont LM}}$ and $\underline{\tilde{I}}_{\text{\normalfont GMI}}$.} The worst-case rates \eqref{eqn:approxLMRate}--\eqref{eqn:approxGMI} can be lower-bounded by setting $s=1$ and $a(x)=0$ for every input as%
	\begin{align} \label{eqn:LowerBoundWorstCaseRates1}
		\underline{\tilde{I}}_{\text{\normalfont LM}}(\QX, \widehat{W}\!, r) 
		&\ge \underline{\tilde{I}}_{\text{\normalfont GMI}}(\QX, \widehat{W}\!, r) \\
		&\ge I_{\text{\normalfont MI}}(\QX, \widehat{W}) - \sqrt{2r V_{1, 0}(\QX, \widehat{W}) }. \label{eqn:LowerBoundWorstCaseRates2}
	\end{align}
	 As $r \to 0$ the penalty term shrinks until $I_{\text{\normalfont MI}}(\QX, \widehat{W})$ is achieved. The bound is tight at $r=0$. It allows the evaluation of the gain of introducing $s\ge0$ before applying Markov's inequality in the achievability proof \cite{Foundations}.
\end{enumerate}

\subsection{Example: Symmetric $\widehat{W}$ and Equiprobable $\QX$} \label{RateEgSymChannel}
We derive the worst-case GMI for a symmetric $\widehat{W}$ and a equiprobable input distribution $\QX(x) = |\Xc|^{-1}$ for $x\in\Xc$. This combination could be used when $W$ is close to being symmetric.
Symmetry is defined as Gallager \cite[Pag. 94]{GallagerWiley1968}, i.e. when each row or column of $\widehat{W}$ is a permutation of each other row or column, respectively.
We show the final results here; the complete derivations are found in Appendix \ref{ProofDiscreteSymmetricChannel}.

The worst-case GMI in \eqref{eqn:approxGMI} can be simplified due to the symmetry of $\widehat{W}$. Specifically, the magnitudes in \eqref{eqn:approxGMI} can be expressed using the first row of $\widehat{W}$, $\widehat{W}_{\text{sym}}(y)\triangleq W(y|x_1)$, as%
\begin{align}\label{eqn:worst_gmi_additive_channel}
I_{s,0}^{\text{\normalfont ML}}(\QX, \widehat{W}) &= \log \frac{|\Xc|}{{\sum_y} \widehat{W}_{\text{sym}}(y)^s} - sH(\widehat{W}_{\text{sym}})\\ \label{eqn:worst_gmi_additive_channel_V}
V_{s,0}(\QX, \widehat{W}) &= s^2 \Big( \EX_{\widehat{W}_{\text{sym}}}[ \log^2 \widehat{W}_{\text{sym}}] - H^2(\widehat{W}_{\text{sym}}) \Big)%
\end{align}
where $H(\widehat{W}_{\text{sym}})$ is the entropy of the random variable characterized by the probability mass function $\widehat{W}_{\text{sym}}$.

Next, we present two additional insightful results.
One of the most curious findings of this case study is that the worst-case channel in the chi-squared distance ball $\tilde{\mathcal{B}}$ defined in \eqref{eqn:WorstCaseRatesConstraint2} has the same structure as $\widehat{W}$, i.e. $\tilde{W}^*_{\text{\normalfont GMI}}$ is also a symmetric DMC and is independent of $s$. This suggests two key points. Firstly, adding $a(x)$ does not offer any advantage, and thus i.i.d. and constant-composition achieve the same rate, as for output-symmetric channels and metrics \cite[Ch. 2.4.2]{Foundations}. Secondly, for a scenario where the channel is known to be symmetric but unknown, the worst-case channel in $\tilde{\mathcal{B}}$, when considering only channels with the same structure as $\widehat{W}$, leads to the same solution we obtained.

The next result arises from applying the bounds in point \ref{Th1:7}) after Theorem \ref{theorem:WorstCaseLMRate}. Specifically, the worst-case GMI rate can be lower bounded by setting $s=1$ to yield%
\begin{align}
	 \underline{\tilde{I}}_{\text{\normalfont GMI}}(\QX, \widehat{W}\!, r) \geq C({\widehat{W}}) - \sqrt{2 r \VAR_{\widehat{W}_{\text{sym}}}[\log \widehat{W}_{\text{sym}}]},%
\end{align}
where $C({\widehat{W}}) \triangleq \log |\Xc| {-} H(\widehat{W}_{\text{sym}})$ is the capacity of DMC $\widehat{W}$. The bound becomes tighter as $r\rightarrow0$.

\subsection{Example: Modulo-additive $\widehat{W}$ and Equiprobable $\QX$}
Further results can be obtained even in a more closed form for the specific case of a modulo-additive $\widehat{W}$ by particularizing  \eqref{eqn:worst_gmi_additive_channel}--\eqref{eqn:worst_gmi_additive_channel_V} for the  channel estimate%
\begin{equation} \label{eqn:additive_What}
	\widehat{W} = {\begin{bmatrix}
        \bar{p} & p & \cdots & p \\
        p & \bar{p} & \cdots & p \\
        \vdots & \vdots & \ddots & \vdots \\
        p & p & \cdots & \bar{p}
    \end{bmatrix}}
\end{equation}
with $p\in [ 0, |\Xc|^{-1} ]$ and $\bar{p} \triangleq 1- (|\Xc|{-}1)p \ge p$ defined for compactness of notation.

In this case, we have%
\begin{align}
	I_{s,0}^{\text{\normalfont ML}}(\QX, \widehat{W}) &= \log \frac{|\Xc|}{{\bar{p}}^s + (1-\bar{p})p^{s-1}} \nonumber \\
	& \hspace{3em} + s ( \bar{p}\log \bar{p} + (1-\bar{p}) \log p )\\
	V_{s,0}(\QX, \widehat{W}) &= s^2 \cdot \bar{p}(1-\bar{p}) \log^2 ( \bar{p}/p )%
\end{align}
and the worst-case channel in $\tilde{\mathcal{B}}$ is a modulo-additive DMC with crossover probabilities%
\begin{align}
	q=p \left(1+\sqrt{2r} \cdot \sqrt{ \frac{\bar{p}}{1-\bar{p}}} \right).%
\end{align}
The feasibility condition \eqref{eqn:r_pos} leads to $r < \tfrac{1}{2}\tfrac{\bar{p}}{1 - \bar{p}}$. The worst-case bound $r < \tfrac{1}{2}\tfrac{1}{|\Xc| {-} 1}$ is achieved when $p = \tfrac{1}{|\Xc|}$.

\subsection{Example: Ternary-Input Ternary-Output $\widehat{W}$}
\label{TernaryChannelRates}
\begin{figure}[t!]
	\centering
	\input{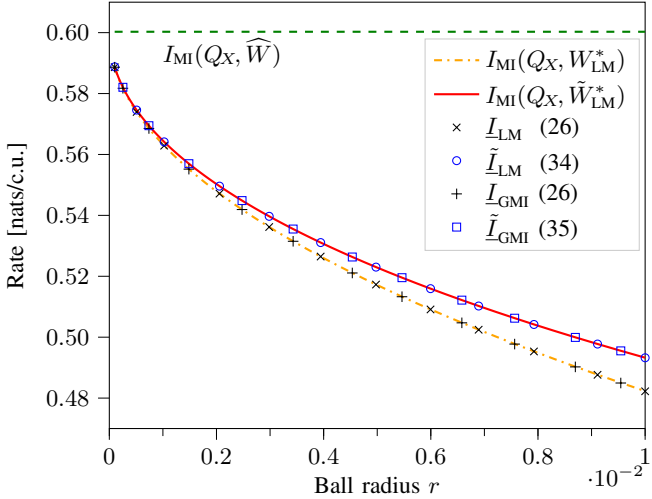}%
	\caption{Information rates (in nats per channel use) computed for $\QX$ in \eqref{eqn:TernaryInputTernaryOutputQx} and estimated channel $\widehat{W}$ in \eqref{eqn:TernaryInputTernaryOutputChannel}.}
	\label{fig:WorstCaseRates}
\end{figure}

We compute the worst-case LM and GMI rates $\underline{\tilde{I}}_{\text{LM}}, \underline{\tilde{I}}_{\text{GMI}}$ from \eqref{eqn:approxLMRate}--\eqref{eqn:approxGMI}, using the following input distribution $\QX$ and ternary-input ternary-output estimated channel $\widehat{W}$:
\begin{align} \label{eqn:TernaryInputTernaryOutputQx} \QX =& \begin{bmatrix} 0.3 & 0.3 & 0.4\end{bmatrix}\\
	\widehat{W} =& \begin{bmatrix}
           0.85 & 0.05 & 0.10 \\
           0.15 & 0.825 & 0.025 \\
           0.025 & 0.10 & 0.875 \\
         \end{bmatrix}.
         \label{eqn:TernaryInputTernaryOutputChannel}
\end{align}
We plot them in Fig. \ref{fig:WorstCaseRates} along with $\underline{I}_{\text{LM}}$ and $\underline{I}_{\text{GMI}}$ numerically computed from \eqref{eqn:WorstCaseRateFormulation} using an off-the-shelf solver.

$I_{\text{MI}}(\QX, \widehat{W})$ is shown in Fig. \ref{fig:WorstCaseRates} for reference. It is achievable as $r \to 0$. We also plot the mutual information of the worst-case channels obtained from the solver, $W^\ast_{\text{\normalfont LM}}$, and from our approximation, $\tilde{W}^\ast_{\text{\normalfont LM}}$.
The approximation is very accurate for $r$ sufficiently small, thus validating our approach for the regime of interest, i.e. small mismatch $r$.
For $r>0$ the curves decrease from the reference $I_{\text{MI}}(\QX, \widehat{W})$, as supported by point \ref{Th1:5}) after Theorem 1. Specifically, both the true and approximate worst-case rates decrease with an infinitely negative gradient at $r=0$. This behavior arises because the worst-case channel in the relative entropy ball is such that it causes the respective LM and GMI rate to decrease with an infinite slope. This indicates that even a small mismatch can have a significant impact on the achievable rates.
The feasibility of the worst-case channels \eqref{eqn:r_pos} are determined for $r < 2.35$, thus confirming the validity of our approach for these simulations.

The mutual information for the worst-case channels, illustrated by solid and dash-dotted lines, also exhibits the aforementioned behavior as the worst-case channels have a linear dependence on $\sqrt{r}$. Simulations also show that they are very close to the achievable rates we derived for the mismatched compound channel. This strongly suggests that the achievable rate for the compound channel coincides, within the limits of numerical accuracy, with the maximum rate achievable when communication is over the worst-case channel within the ball and the decoder is matched.

\section{Error Exponents} \label{ErrorExponentsUnderSmallMismatch}
In this section, we study the worst-case mismatched random coding error exponents%
\begin{align}
    \underline{E}_\text{r}^{\text{iid}}(\QX, \widehat{W}\!, R, r) &= \operatorname*{min}_{W \in \mathcal{B}}\operatorname*{max}_{\rho \in [0,1]} E_0^{\text{iid}}(\QX,W\!, \rho) - \rho R\\
    \underline{E}_\text{r}^{\text{cc}}(\QX, \widehat{W}\!, R, r) &= \operatorname*{min}_{W \in \mathcal{B}}\operatorname*{max}_{\rho \in [0,1]} E_0^{\text{cc}}(\QX, W\!, \rho) - \rho R
\end{align}
where%
\begin{align}\label{eq:e0iid}
E_0^{\text{iid}}(\QX, W\!, \rho)&\triangleq\ \operatorname*{sup}_{s \ge 0} 	E_{s, \zerov, \zerov,\rho}(\QX,W)\\
E_0^{\text{cc}}(\QX, W\!, \rho)&\triangleq \operatorname*{sup}_{c_1(\cdot), c_2(\cdot)}\operatorname*{sup}_{s \ge 0, \lv, \bar{\lv}}\! E_{s, \lv, \bar{\lv},\rho}(\QX,W) \label{eq:e0cc}%
\end{align}
and%
\begin{align}
	E_{s, \lv, \bar{\lv},\rho}(\QX,W) &\triangleq - \log \EX_{\QX{\times}W} [\varepsilon_{s, \lv, \bar{\lv},\rho}(X,Y)],
\end{align}
where we have redefined the \emph{mismatched exponent density} as a function of the channel estimate as%
\begin{align}\label{eq:exponent_density}
	\varepsilon_{s, \lv, \bar{\lv},\rho}(x, y) &\triangleq \bigg( \frac{\EX_{\QX} [\widehat{W}(y|X)^s e^{\bar\lv^{\!T\!} (\cv(X)-\pv)}]}{\widehat{W}(y|x)^s e^{{\lv}^{\!T\!} (\cv(x)-\pv)}} \bigg)^\rho\!.%
\end{align}
We aim to find the worst-case mismatched random coding error exponents $\underline{E}{}_{\text{r}}^{\text{iid}}$ and $\underline{E}{}_{\text{r}}^{\text{cc}}$ subject to Definition \ref{def:small_mismatch}. The following lemma serves as a preliminary to our main theorems.

\begin{lemma}
\label{lemma:ExponentEquivalentToGallagerFunction}
Finding the worst-case mismatched random coding error exponent is equivalent to first finding its corresponding worst-case Gallager function, i.e.,
\begin{align}
\underline{E}_{\mathrm{r}}(\QX, \widehat{W}\!, R, r) = \operatorname*{max}_{\rho \in [0,1]} \operatorname*{min}_{W \in \mathcal{B}} E_0(\QX, W\!, \rho) - \rho R.
\end{align}
\end{lemma}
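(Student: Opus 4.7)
The plan is to invoke Sion's minimax theorem on the function $f(W, \rho) \triangleq E_0(\QX, W, \rho) - \rho R$ defined on the product domain $\mathcal{B} \times [0, 1]$. Weak duality already gives $\max_\rho \min_W f \le \min_W \max_\rho f$, so the nontrivial content of the lemma is the reverse inequality, which will follow once the hypotheses of Sion's theorem are verified.

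The topological hypotheses are routine. The constraint set $\mathcal{B}(\widehat{W}, r) = \{P_{Y|X} \colon \DKL(\widehat{W} \| P_{Y|X} | \QX) \le r\}$ is convex by the joint convexity of relative entropy in its two arguments, closed by lower semicontinuity of $\DKL$, and bounded as a subset of the simplex of conditional distributions; hence it is compact. The interval $[0, 1]$ is convex and compact, and $f$ is continuous in both arguments.

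The convexity of $f(\cdot, \rho)$ in $W$ for each fixed $\rho$ is the easy part, and rests on the key observation that in \eqref{eq:exponent_density} the density $\varepsilon_{s, \lv, \bar{\lv}, \rho}(x, y)$ depends on $\QX$, $\widehat{W}$, $s$, $\lv$, $\bar{\lv}$, and $\cv$, but \emph{not} on the true channel $W$. Consequently $\EX_{\QX{\times}W}[\varepsilon_{s, \lv, \bar{\lv}, \rho}(X,Y)] = \sum_{x,y} \QX(x) W(y|x) \varepsilon_{s, \lv, \bar{\lv}, \rho}(x,y)$ is affine in $W$, and its negative log is convex in $W$. The suprema over $s$, $\lv$, $\bar{\lv}$ and the cost functions in \eqref{eq:e0iid}--\eqref{eq:e0cc} preserve convexity, so $E_0(\QX, W, \rho)$ is convex in $W$, and the affine term $-\rho R$ does not affect this.

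The main obstacle is verifying concavity (or at least quasi-concavity together with upper semicontinuity) of $f(W, \cdot)$ in $\rho$ for fixed $W$. For each choice of auxiliary parameters, setting $Z(x, y) = \varepsilon_{s, \lv, \bar{\lv}, 1}(x, y)$ so that $\varepsilon_{s, \lv, \bar{\lv}, \rho} = Z^\rho$, a Cauchy--Schwarz argument on the tilted measure $Z^\rho \, d(\QX {\times} W)$ shows that $\rho \mapsto \log \EX_{\QX{\times}W}[Z^\rho]$ is convex in $\rho$, so that its negative is concave in $\rho$. Because $E_0(\QX, W, \rho)$ is defined as a supremum of such concave functions over the auxiliary parameters, concavity in $\rho$ is not automatic (pointwise suprema of concave functions need not be concave); one therefore has to appeal to the classical concavity of the Gallager envelope for mismatched $E_0$ functions \cite[Ch.~7]{Foundations}. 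With concavity in $\rho$, convexity in $W$, and the required compactness/continuity in hand, Sion's minimax theorem applies to $f$ on $\mathcal{B} \times [0, 1]$ and delivers the claimed identity, together with the existence of a saddle point $(W^\ast, \rho^\ast)$ achieving both optima simultaneously.
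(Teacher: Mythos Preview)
Your approach is essentially the same as the paper's: both invoke a minimax theorem (you use Sion, the paper cites Fan \cite{Minimax}) after asserting convexity in $W$ and concavity in $\rho$. The paper's proof is two lines and simply states ``the problem is concave in $\rho\in[0,1]$ and convex in $W\in\mathcal{B}$'' without further justification; you give the argument for convexity in $W$ in full and correctly flag that concavity in $\rho$ is the delicate hypothesis (since $E_0$ is a supremum over $s$, $\lv$, $\bar{\lv}$, $c_1$, $c_2$ of functions concave in $\rho$, and pointwise suprema do not preserve concavity), deferring to the literature for that step. So your proof is a more careful version of the same argument, and in fact surfaces a verification that the paper elides.
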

\begin{proof}
The worst-case error exponent $\underline{E}_{\text{r}}(\QX, \widehat{W}\!, R, r)$ is%
\begin{align}
    \underline{E}_{\text{r}} (\QX, \widehat{W}\!, R, r)
\label{eqn:ExponentMinMax}
    &= \operatorname*{min}_{W \in \mathcal{B}}  \operatorname*{max}_{\rho \in [0,1]} E_0(\QX, W\!, \rho) - \rho R \\
    &= \operatorname*{max}_{\rho \in [0,1]} \operatorname*{min}_{W \in \mathcal{B}} E_0(\QX, W\!, \rho) - \rho R,
\label{eqn:ExponentMax}
\end{align}
where, as the problem is concave in $\rho\in[0,1]$ and convex in $W\in\mathcal{B}$, the minimax theorem \cite[Th. 1]{Minimax} has been used to swap the order of the optimizations from \eqref{eqn:ExponentMinMax} to \eqref{eqn:ExponentMax}. Fixing $(\QX, R)$ and noticing that $\rho R$ is independent of $W$, the problem is then equivalent to minimizing $E_0$ prior to maximizing over $\rho\in[0,1]$. 
\end{proof}

Consequently, we focus on deriving the worst-case mismatched Gallager $E_0$ functions for i.i.d. and constant-composition codes%
\begin{align} \label{eqn:WorstCaseE0iid}
	\underline{E}{}_0^{\text{iid}}(\QX, \widehat{W}\!, \rho, r) = \operatorname*{min}_{W \in \mathcal{B}} &\ \operatorname*{sup}_{s \ge 0}\ E_{s,\zerov,\zerov,\rho}(\QX,W)\\
	\label{eqn:WorstCaseE0cc}
	\underline{E}{}_0^{\text{cc}}(\QX, \widehat{W}\!, \rho, r) = \operatorname*{min}_{W \in \mathcal{B}}&\ \operatorname*{sup}_{c_1\!(\cdot), c_2\!(\cdot)} \nonumber\\ 
	&\ \operatorname*{sup}_{s \ge 0, \lv, \bar{\lv}} E_{s, \lv, \bar{\lv},\rho}(\QX,W).%
\end{align}
We describe the results together in the following theorem. The corresponding worst-case error exponents need to be computed by maximizing over $\rho\in[0,1]$ afterwards, but they are not included herein for ease of readability.

\begin{theorem}\label{theorem:WorstCaseE0}
     Consider a class of DMCs $W(y|x)$, a mismatched decoder based on the channel estimate $\widehat{W}(y|x)$ and a fixed input distribution $\QX(x)$ satisfying \eqref{eqn:WorstCaseRatesConstraint}. Then, for sufficiently small $r \ge 0$, the worst-case mismatched Gallager function for i.i.d. and constant-composition random coding can be expressed as%
    \begin{align}
    	\underline{E}_0^{\text{\normalfont iid}}(\QX&, \widehat{W}\!, \rho, r) = \operatorname*{sup}_{s\ge0} \bigg\{ E_{s, \zerov, \zerov, \rho}^{\text{\normalfont ML}}(\QX, \widehat{W}) \nonumber\\
    	&-\log \bigg( 1 + \sqrt{2 r V_{s, \zerov, \zerov, \rho}(\QX, \widehat{W})} \bigg) + o(\sqrt{r}) \bigg\}     \label{eqn:FinalExpressioniidExponentSmallMismatch}\\
        \underline{E}_0^{\text{\normalfont cc}}(\QX&, \widehat{W}\!, \rho, r) = \operatorname*{sup}_{c_1\!(\cdot), c_2\!(\cdot)}\operatorname*{sup}_{s \ge 0, \lv, \bar{\lv}} \bigg\{ E_{s, \lv, \bar{\lv}, \rho}^{\text{\normalfont ML}}(\QX, \widehat{W}) \nonumber\\
        &- \log \bigg(1+\sqrt{2r V_{s, \lv, \bar{\lv}, \rho}(\QX, \widehat{W}) } \bigg)+ o(\sqrt{r}) \bigg\} \label{eqn:FinalExpressionCcExponentSmallMismatch}
    \end{align}
    where%
    \begin{align}
    	\label{eqn:MLDecodingiidExponent}
    	E_{s, \lv, \bar{\lv}, \rho}^{\text{\normalfont ML}}(\QX, \widehat{W}) &\triangleq - \log \EX_{\QX{\times}\widehat{W}} [ \varepsilon_{s, \lv, \bar{\lv}, \rho}(X, Y) ]\\
    	\label{eqn:V_functionEps}
	    V_{s, \lv, \bar{\lv}, \rho}(\QX, \widehat{W}) &\triangleq \frac{\EX_{\QX} \big[ \text{\normalfont Var}_{\widehat{W}}[\varepsilon_{s, \lv, \bar{\lv}, \rho}(X, Y) | X] \big]}{\EX_{\QX{\times}\widehat{W}}^2[\varepsilon_{s, \lv, \bar{\lv},\rho}(X,Y)]}.%
    \end{align}
\end{theorem}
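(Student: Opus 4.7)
The plan begins by invoking Lemma~\ref{lemma:ExponentEquivalentToGallagerFunction} so that attention can be restricted to the worst-case Gallager functions \eqref{eqn:WorstCaseE0iid}--\eqref{eqn:WorstCaseE0cc}. The key structural observation is that the mismatched exponent density $\varepsilon_{s,\lv,\bar\lv,\rho}(x,y)$ in \eqref{eq:exponent_density} is defined in terms of $\widehat W$, $\QX$ and the optimization parameters $(s,\lv,\bar\lv,c_1,c_2)$ only, with no dependence on the true channel $W$. Consequently, $E_{s,\lv,\bar\lv,\rho}(\QX,W) = -\log\EX_{\QX\times W}[\varepsilon_{s,\lv,\bar\lv,\rho}]$ is the negative logarithm of a functional that is affine, and hence convex, in $W$, mirroring the analogous property of the mismatched information density exploited in the proof of Theorem~\ref{theorem:WorstCaseLMRate}. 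This suggests solving the inner minimization over $W\in\mathcal{B}$ in closed form for fixed parameters and only afterwards taking the outer suprema.

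Mirroring the strategy of Theorem~\ref{theorem:WorstCaseLMRate}, I would first absorb the $o(\cdot)$ term from \eqref{eqn:DKLapproximation} into the objective as an $o(\sqrt r)$ penalty, effectively replacing $\mathcal{B}$ by the chi-squared ball $\tilde{\mathcal{B}}(\widehat W,r)$. Writing $W(y|x)=\widehat W(y|x)+\theta(y|x)$ and using the monotonicity of $-\log$, the inner problem reduces to maximizing the linear functional $\sum_{x,y}\QX(x)\theta(y|x)\varepsilon_{s,\lv,\bar\lv,\rho}(x,y)$ in $\theta$ subject to $\sum_y\theta(y|x)=0$ for every $x\in\Xc$ and $\tfrac{1}{2}\sum_{x,y}\QX(x)\theta(y|x)^2/\widehat W(y|x)\le r$. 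The equality constraint lets one replace $\varepsilon_{s,\lv,\bar\lv,\rho}(x,y)$ by its centered version $\tilde\varepsilon(x,y)=\varepsilon_{s,\lv,\bar\lv,\rho}(x,y)-\EX_{\widehat W}[\varepsilon_{s,\lv,\bar\lv,\rho}(x,Y)\mid X{=}x]$ in the objective. A weighted Cauchy--Schwarz step, equivalent to KKT stationarity with the ball constraint active, then yields the tight bound $\sqrt{2r\,\EX_{\QX}[\VAR_{\widehat W}[\varepsilon_{s,\lv,\bar\lv,\rho}(X,Y)\mid X]]}$, attained by $\theta^\ast(y|x)\propto\widehat W(y|x)\tilde\varepsilon(x,y)$. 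Factoring $\EX_{\QX\times\widehat W}[\varepsilon_{s,\lv,\bar\lv,\rho}]$ out of the maximized expectation, taking $-\log$, and identifying the ratio in \eqref{eqn:V_functionEps} delivers the bracketed expansions in \eqref{eqn:FinalExpressioniidExponentSmallMismatch}--\eqref{eqn:FinalExpressionCcExponentSmallMismatch}. The $o(\sqrt r)$ remainder bookkeeps both the penalty transferred from \eqref{eqn:DKLapproximation} and the feasibility correction ensuring $W\ge 0$, analogous to condition \eqref{eqn:r_pos}.

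The main obstacle I anticipate is justifying the interchange of $\min_W$ with the outer suprema over $(s,\lv,\bar\lv)$ and, in the constant-composition case, the cost functions $(c_1,c_2)$. The objective is convex in $W$ but is not in general concave in these parameters, so Sion's minimax theorem does not apply directly. I would handle this through a continuity argument: at $r=0$ the problem reduces to the matched case with optimizer at $W=\widehat W$, and the closed-form solution obtained above depends continuously on all its arguments. Hence for sufficiently small $r$ the worst-case optimizers lie in a compact neighbourhood of the matched ones, on which the $o(\sqrt r)$ remainder can be controlled uniformly, legitimating the swap at the cost of an error that is absorbed into the final $o(\sqrt r)$ term. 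The i.i.d.\ expansion \eqref{eqn:FinalExpressioniidExponentSmallMismatch} then follows by specialising $\lv=\bar\lv=\zerov$, while the constant-composition form \eqref{eqn:FinalExpressionCcExponentSmallMismatch} is obtained by carrying $(c_1,c_2,\lv,\bar\lv)$ unchanged through the derivation, since none of the steps exploit their specific form.
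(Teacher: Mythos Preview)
Your proposal is correct and follows essentially the same route as the paper: reduce to the worst-case $E_0$, observe that $\varepsilon_{s,\lv,\bar\lv,\rho}$ depends only on $\widehat W$ so that the inner problem is linear in $\theta$, approximate the relative-entropy ball by the chi-squared ball, solve the linear--quadratic constrained problem (the paper does this by KKT, you phrase it as weighted Cauchy--Schwarz, which is equivalent here), and factor out $\EX_{\QX\times\widehat W}[\varepsilon]$ before taking $-\log$. The one substantive difference is the handling of the min--sup swap: the paper asserts that the objective is concave in $s,c_1(\cdot),c_2(\cdot)$ and invokes Fan's minimax theorem directly, whereas you doubt concavity and instead propose a continuity/compactness argument near $r=0$. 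Your workaround is plausible but heuristic as stated; the paper's route is cleaner if the concavity claim is accepted (it is asserted without a detailed justification in the appendix, paralleling the citation to \cite{Foundations} used for the LM rate). Either way, the resulting expansion is the same, and the $o(\sqrt r)$ bookkeeping you describe matches the paper's treatment in Appendix~\ref{ProofLMRate}.
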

\begin{proof}
The proof is found in Appendix \ref{ProofiidGallagerFunction} and follows similar steps as the worst-case LM rate in Appendix \ref{ProofLMRate}.
\end{proof}

\begin{corollary}
    For sufficiently small $r$, the worst-case mismatched Gallager functions can be expanded as%
    \begin{align} 
        \underline{E}_0^{\text{\normalfont iid}}(\QX, \widehat{W}\!, \rho, r) &= \operatorname*{sup}_{s\ge0} \bigg\{ E_{s, \zerov, \zerov, \rho}^{\text{\normalfont ML}}(\QX, \widehat{W}) \nonumber\\ &- \sqrt{2 r V_{s, \zerov, \zerov, \rho}(\QX, \widehat{W})} + o(\sqrt{r}) \bigg\}     \\ \label{eqn:trueE0cc}
        \underline{E}_0^{\text{\normalfont cc}}(\QX, \widehat{W}\!, \rho, r) &= \operatorname*{sup}_{c_1\!(\cdot), c_2\!(\cdot)}\operatorname*{sup}_{s \ge 0, \lv, \bar{\lv}} \bigg\{ E_{s, \lv, \bar{\lv}, \rho}^{\text{\normalfont ML}}(\QX, \widehat{W}) \nonumber \\ &- \sqrt{2r V_{s, \lv, \bar{\lv}, \rho}(\QX, \widehat{W})} + o(\sqrt{r}) \bigg\}
    \end{align}
\end{corollary}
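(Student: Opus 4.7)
The plan is to derive the two expansions directly from Theorem~\ref{theorem:WorstCaseE0} by Taylor-expanding the logarithmic penalty term around $r=0$, and then absorbing the higher-order contributions into the existing $o(\sqrt{r})$ remainder. Since the structure of \eqref{eqn:FinalExpressioniidExponentSmallMismatch} and \eqref{eqn:FinalExpressionCcExponentSmallMismatch} is identical (the constant-composition case simply carries the extra suprema over $c_1,c_2,\lv,\bar{\lv}$), I would argue both simultaneously and then specialize.

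First, I would fix the inner optimization variables and let $V\triangleq V_{s,\lv,\bar{\lv},\rho}(\QX,\widehat{W})\ge 0$, which is finite for every admissible choice of parameters by the boundedness of $\widehat{W}$. Using the standard expansion $\log(1+x) = x + o(x)$ as $x\to 0$, I would write
\begin{equation}
\log\bigl(1+\sqrt{2rV}\bigr) \;=\; \sqrt{2rV} \;+\; o\bigl(\sqrt{r}\bigr) \qquad (r\to 0).
\end{equation}
Substituting into \eqref{eqn:FinalExpressioniidExponentSmallMismatch} and \eqref{eqn:FinalExpressionCcExponentSmallMismatch}, the bracketed objective becomes $E^{\text{ML}}_{s,\lv,\bar{\lv},\rho}(\QX,\widehat{W}) - \sqrt{2rV} + o(\sqrt{r})$, which is exactly the integrand in the claimed expansion.

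The main obstacle is purely technical: one must check that the $o(\sqrt{r})$ term behaves uniformly in the optimization parameters, so that exchanging it with the suprema over $s\ge 0$, $c_1,c_2,\lv,\bar{\lv}$ preserves the asymptotic statement. To handle this, I would restrict attention to a neighborhood of the optimizers. By point~\ref{Th1:5}) after Theorem~\ref{theorem:WorstCaseLMRate} (and the analogous observation for $E_0$), the supremum in each expression is attained at parameters bounded as $r\to 0$ (tending to the matched-metric values $s=(1+\rho)^{-1}$, $\lv=\bar{\lv}=\zerov$ when $r=0$). On any such bounded region $V_{s,\lv,\bar{\lv},\rho}(\QX,\widehat{W})$ is uniformly bounded above, so the error in the Taylor expansion $\log(1+\sqrt{2rV}) - \sqrt{2rV}$ is $O(r)$ uniformly, which is $o(\sqrt{r})$.

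With this uniformity in hand, I would conclude by taking the suprema on both sides and absorbing all $O(r)$ remainders, together with the $o(\sqrt{r})$ term already present in Theorem~\ref{theorem:WorstCaseE0}, into a single $o(\sqrt{r})$ contribution, yielding the stated expansions. The constant-composition case follows by applying the same argument inside the outer suprema over $c_1(\cdot)$ and $c_2(\cdot)$; since those additional variables only parameterize the cost functions and the relevant quantities remain bounded in the relevant regime, no further care is required.
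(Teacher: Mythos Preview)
Your approach is essentially the same as the paper's: the paper simply states that the corollary follows from a first-order Puiseux series expansion of \eqref{eqn:FinalExpressioniidExponentSmallMismatch}--\eqref{eqn:FinalExpressionCcExponentSmallMismatch} around $r=0$, noting that the resulting $O(r)$ error is absorbed into the existing $o(\sqrt{r})$ term. Your additional discussion of uniformity in the optimization parameters goes beyond what the paper addresses, but the core argument is identical.
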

\begin{proof}
	The expressions come from a first-order Puiseux series expansion of \eqref{eqn:FinalExpressioniidExponentSmallMismatch}--\eqref{eqn:FinalExpressionCcExponentSmallMismatch} around $r=0$. The expansions incur an error of order $O(r)$ that is absorbed in $o(\sqrt{r})$ as the latter dominates the summation for small $r\geq 0$.
\end{proof}

The following observations are in order:%
\begin{enumerate}[1)\IEEElabelindent=0em \labelsep=1pt]
	\item \textit{The solution for a chi-squared distance ball.}	The worst-case mismatched $E_0$ functions for a small mismatch given by a chi-squared distance ball as defined in \eqref{eqn:WorstCaseRatesConstraint2} are exactly%
	\begin{align}\label{eqn:approxE0iid}
		\underline{\tilde{E}}{}_0^{\text{\normalfont iid}}(\QX, \widehat{W}\!&, \rho, r) = \operatorname*{sup}_{s\ge0} \bigg\{ E_{s, \zerov, \zerov, \rho}^{\text{\normalfont ML}}(\QX, \widehat{W}) \nonumber\\
		&-\log \bigg( 1 + \sqrt{2 r V_{s, \zerov, \zerov, \rho}(\QX, \widehat{W})} \bigg) \bigg\}  \\ \label{eqn:approxE0cc}
		\underline{\tilde{E}}{}_0^{\text{\normalfont cc}}(\QX, \widehat{W}\!&, \rho, r) = \operatorname*{sup}_{c_1\!(\cdot), c_2\!(\cdot)}\operatorname*{sup}_{s \ge 0, \lv, \bar{\lv}} \bigg\{ E_{s, \lv, \bar{\lv}, \rho}^{\text{\normalfont ML}}(\QX, \widehat{W}) \nonumber\\ 
		&- \log \left(1+\sqrt{2r V_{s, \lv, \bar{\lv}, \rho}(\QX, \widehat{W}) } \right) \!\bigg\}
	\end{align}
	achieved for fixed $s, \lv, \bar{\lv}, c_1(\cdot), c_2(\cdot)$ by the channel distribution%
	\begin{align}
		\label{eqn:optTrueWE0cc}
		\tilde{W}_{\!s,\lv,\bar{\lv}}^\ast(y|x) =\ &\widehat{W}(y|x) \nonumber \\
		&\cdot \left(1 + \frac{\sqrt{2r} \cdot\varphi_{s, \lv, \bar{\lv}, \rho}(x, y)}{\EX_{\QX{\times}\widehat{W}}[\varepsilon_{s, \lv, \bar{\lv}, \rho}(X, Y)]} \right)%
	\end{align}
	with%
	\begin{equation}
		\varphi_{s, \lv, \bar{\lv}, \rho}(x, y) \triangleq \frac{\varepsilon_{s, \lv, \bar{\lv}, \rho}(x,y) - \EX_{\widehat{W}}[\varepsilon_{s, \lv, \bar{\lv}, \rho}(x,Y)]}{\sqrt{V_{s, \lv, \bar{\lv}, \rho}(\QX, \widehat{W})} }.
	\end{equation}
	The worst-case channels $\tilde{W}_{\text{\normalfont iid}}^\ast, \tilde{W}_{\text{\normalfont cc}}^\ast$ are found after maximizing \eqref{eqn:approxE0iid}--\eqref{eqn:approxE0cc} over $s, \lv, \bar{\lv}, c_1(\cdot), c_2(\cdot)$. 
	
	\item \textit{Feasibility of worst-case channels.} Similarly to what we did for the worst-case achievable rates, the following condition over the ball radius needs to be satisfied to guarantee non-negativity of the obtained distributions:%
	\begin{equation}
		r < \min_{(x,y)\colon \varphi_{s,\lv,\bar{\lv},\rho}(x, y)>0}\ \frac{1}{2 \varphi_{s,\lv,\bar{\lv},\rho}(x, y)^2}.%
	\end{equation}
	
	\item \textit{The solution.} The worst-case channels are always found at the border of the respective ball, i.e., when \eqref{eqn:WorstCaseRatesConstraint} or \eqref{eqn:WorstCaseRatesConstraint2} holds with equality.
	
	\item \textit{Expansion term.} The expansion yields a penalty term proportional to $\sqrt{r}$ at most. This indicates that even a small mismatch causes a severe penalty on the $E_0$ function and thus on the error exponent.
		
	\item \textit{Achievable $E_0$ function at $r=0$.} For a fixed $\widehat{W}$,%
	\begin{equation} 
		\underline{E}{}_0^{\text{cc}}(\QX, \widehat{W}\!, \rho, r) \leq\!\! \operatorname*{sup}_{c_1\!(\cdot), c_2\!(\cdot)}\operatorname*{sup}_{s \ge 0, \lv, \bar{\lv}}\!\! E_{s, \lv, \bar{\lv},\rho}^{\text{ML}}(\QX,\widehat{W})
	\end{equation}
	coinciding with the matched Gallager function for channel $\widehat{W}$. The proof follows because $\widehat{W} \in \mathcal{B}$ for any $r\ge0$, so the minimum in \eqref{eqn:WorstCaseE0cc} can be upper bounded by evaluating at $W=\widehat{W}$. The bound is tight at $r=0$ as $\mathcal{B}=\{\widehat{W}\}$.	
	
	\item \textit{$E_0$ function at $\rho=0$.} The worst-case LM and GMI rates cannot be obtained by differentiating the worst-case $E_0$ functions and evaluating at $\rho=0$. This is in contrast to what occurs for $r=0$, making our separate analysis of achievable rates and error exponents highly relevant.
\end{enumerate}

\subsection{Example: Symmetric $\widehat{W}$ and Equiprobable $\QX$}
As in Section \ref{RateEgSymChannel}, we derive the worst-case i.i.d.\ Gallager function for symmetric \cite[Pag. 94]{GallagerWiley1968} estimated channels $\widehat{W}$ and equiprobable input $\QX(x) = |\Xc|^{-1}$, showing only the final results; the full derivations are found in Appendix  \ref{ProofDiscreteSymmetricChannel}.

The worst-case mismatched Gallager function yields%
\begin{align}\label{eqn:worst_iid_symmetric_channel}
	\underline{\tilde{E}}{}_0^{\text{\normalfont iid}}(\QX, \widehat{W}\!, \rho, r) &=  \operatorname*{sup}_{s\ge0} \Bigg\{ \! \log \frac{|\Xc|^\rho}{\kappa_{\!s}^\rho \kappa_{1{-}s\rho}} \nonumber \\
	&- \log \bigg( 1 + \sqrt{2r} \sqrt{\frac{\kappa_{1-2s\rho}}{\kappa_{1-s\rho}^2} - 1} \bigg)  {\Bigg\}}%
\end{align}
where $\kappa_t \triangleq \sum_y \widehat{W}(y|x_1)^t$ is an auxiliary function computed from the first row of $\widehat{W}$.%

The same observations in Section \ref{RateEgSymChannel} apply here: the worst-case channel in $\tilde{\mathcal{B}}$ exhibits the same structure as $\widehat{W}$, suggesting that i.i.d and constant-composition random coding achieve the same error exponent.

\subsection{Example: Modulo-Additive $\widehat{W}$ and Equiprobable $\QX$}
Following from the previous example, we now analyze the specific case of a modulo-additive $\widehat{W}$ by particularizing \eqref{eqn:worst_iid_symmetric_channel} for the transition matrix in  \eqref{eqn:additive_What}.
This gives $\kappa_t = \bar{p}^t + (|\Xc|-1)p^t= \bar{p}^t + (1-\bar{p})p^{t-1}$, and thus%
\begin{align}
\underline{\tilde{E}}{}_0^{\text{iid}}&(\QX, \widehat{W}\!, \rho, r) = \nonumber \\
&\operatorname*{sup}_{s \ge 0} \Bigg\{ \log \frac{|\Xc|^{\rho}}{( \bar{p}^{1-s\rho} + (1{-}\bar{p})p^{-s\rho} ) ( \bar{p}^s + (1{-}\bar{p})p^{s-1})^\rho} \nonumber \\
&-  \log \Bigg( 1 + \sqrt{2r} \sqrt{\frac{ \bar{p}^{1-2s\rho} + (1{-}\bar{p})p^{-2s\rho} }{( \bar{p}^{1-s\rho} + (1{-}\bar{p})p^{-s\rho} )^2} {-} 1 }\Bigg) \! \Bigg\}.
\end{align}

\subsection{Example: Ternary-Input Ternary-Output $\widehat{W}$}
\begin{figure}[t]
	\centering
	\input{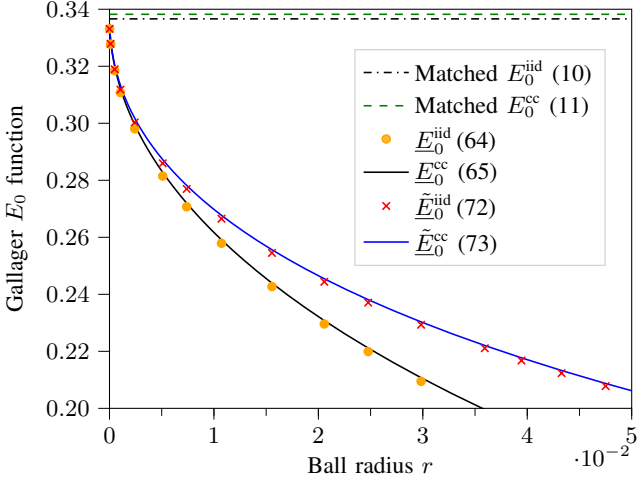}%
	\caption{Matched and mismatched Gallager functions computed for $\rho=0.7$, input distribution $\QX$ in \eqref{eqn:TernaryInputTernaryOutputQx2} and estimated channel $\widehat{W}$ in \eqref{eqn:TernaryInputTernaryOutputChannelExponents}.}
	\label{fig:AllExponents}
\end{figure}

We study the worst-case $\underline{\tilde{E}}{}_0^{\text{iid}}$ \eqref{eqn:approxE0iid} and $\underline{\tilde{E}}{}_0^{\text{cc}}$ \eqref{eqn:approxE0cc} for the following input distribution and channel estimate:%
\begin{align} \label{eqn:TernaryInputTernaryOutputQx2}
	\QX &= \begin{bmatrix} 0.3 & 0.3 & 0.4 \end{bmatrix}\\
  \widehat{W}& = \begin{bmatrix}
     0.85  & 0.05  & 0.10 \\
     0.025 & 0.945 & 0.03 \\
     0.025 & 0.10  & 0.875
 \end{bmatrix}.
         \label{eqn:TernaryInputTernaryOutputChannelExponents}
\end{align}

An off-the-shelf solver is used to numerically compute the true $\underline{E}_0^{\text{iid}}$  \eqref{eqn:WorstCaseE0iid} and $\underline{E}_0^{\text{cc}}$ \eqref{eqn:WorstCaseE0cc}, which are plotted in Fig. \ref{fig:AllExponents} along with the matched Gallager functions $E_0^{\text{iid}}$ \eqref{eqn:MatchediidGallagerFunction} and $E_0^{\text{cc}}$ \eqref{eqn:MatchedccGallagerFunction} under maximum-likelihood decoding using $\widehat{W}$.
The graph shows that $\underline{E}_0^{\text{iid}} \le \underline{E}_0^{\text{cc}}$ for all $r$, and the approximations $\underline{\tilde{E}}{}_0^{\text{\normalfont iid}}$ and $\underline{\tilde{E}}{}_0^{\text{\normalfont cc}}$ show the same behavior. Much like $I_{\text{MI}}(\QX, \widehat{W})$ in Section \ref{AchievableRatesUnderSmallMismatch}, the estimated channel's matched Gallager functions are achievable for $r=0$, but as soon as $r>0$, the mismatched Gallager functions decrease rapidly from their matched counterparts. The worst-case mismatched Gallager functions drop with infinite slope at $r=0$, as expected from their dependence on $\sqrt{r}$ in our findings.

The curves have similar graphical shapes, with approximated curves increasingly higher than the true curves. The error terms appear to be negative at the optimal point in our simulation. Apparently, with increasing $r$, we increasingly overestimate the worst-case Gallager functions, so we also overestimate the error exponents. Additionally, the simulation shows that the approximations deviate from their true values to a greater extent than those for the achievable rates. The reason is that the logarithm in \eqref{eqn:FinalExpressioniidExponentSmallMismatch}--\eqref{eqn:FinalExpressionCcExponentSmallMismatch} is more sensitive to small variations in contrast to the linear dependence in \eqref{eqn:AchievableRateMLDecodingLMRate}. Consequently, the validity of our approximation seems to be restricted to a lower range of values of $r$ for the mismatched error exponents, in contrast to the large range of values valid for the mismatched achievable rates.

\section{Continuous Alphabets}
\label{ContinuousChannels}
In this section, we extend the study of  worst-case achievable rates and Gallager $E_0$ functions to continuous-alphabet channels by allowing the alphabets $\Xc$ and $\Yc$ to take infinite values. Specifically, our focus is on $\Xc = \Yc = \mathbb{R}$. We use $\QX(x)$ and $W(y|x)$ to denote the continuous probability input and channel distributions, respectively, and redefine small mismatch for continuous-alphabet channels as follows. 
\begin{definition} \label{def:small_mismatch_cont}
	For small mismatch (ball radius) between the continuous-alphabet channels $W$ and $\widehat{W}$, we require that%
	\begin{align}\label{eqn:Def2}
		W \in \mathcal{B} = \{ P_{Y|X} \in \mathcal{S} \colon \DKL(\widehat{W}\|P_{Y|X}|\QX) \leq r \}
	\end{align}
	holds for small $r$, where $\mathcal{S}$ is a function space admitting  probability density functions $P_{Y|X}(y|x)$ over $(x,y) \in \mathbb{R}^2$, and $\DKL(\widehat{W}\|P_{Y|X}|\QX)$ defines the conditional relative entropy in integral form as%
	\begin{equation}
		\DKL(\widehat{W} \| P_{Y|X} | \QX) \triangleq\! {\iint} \QX(x) \widehat{W}(y|x)\! \log\! \frac{\widehat{W}(y|x)}{P_{Y|X}(y|x)} \DD x \DD y.%
	\end{equation}
\end{definition}
We analyze both i.i.d. and cost-constrained codes since constant-composition random coding is undefined for continuous alphabets.
As we shall see, most of the previous expressions of information rates and $E_0$ functions remain invariant, by substituting probability distributions with the corresponding probability densities and summations with integrals. The novelty is that the formulation of the continuous case allows for unprecedented closed-form results for the nearest neighbor decoder that further extend \cite{lapidoth1998reliable,Foundations,532892}.

\subsection{i.i.d. Random Coding}
For i.i.d. random coding, the mismatched information density and mismatched exponent density are defined as
\begin{align}
	i_s(x,y) &\triangleq \log \frac{\widehat{W}(y|x)^s}{{\int} \QX(\bar{x}) \widehat{W}(y|\bar{x})^s\DD \bar{x}}\\
	\varepsilon_{s, \rho}(x, y) &\triangleq  \bigg( \frac{{\int} \QX(\bar{x}) \widehat{W}(y|\bar{x})^s \DD \bar{x}}{\widehat{W}(y|x)^s} \bigg)^\rho,%
\end{align}
where we have expressed expectations in integral form for reasons of emphasis.

We want to solve the following problems to determine the worst-case GMI and $E_0$ function:
\begin{align}\label{eqn:Cont:WorstCaseRateFormulation}
\underline{I}_{\text{\normalfont GMI}}&(\QX, \widehat{W}\!, r) = \operatorname*{min}_{W \in \mathcal{B}}\ \operatorname*{sup}_{s \ge 0}\ \EX_{\QX{\times}W} [ i_{s}(X, Y)]\\
\underline{E}{}_0^{\text{iid}}&(\QX, \widehat{W}\!, r) = \operatorname*{min}_{W \in \mathcal{B}}\ \operatorname*{sup}_{s \ge 0} -\log \EX_{\QX{\times}W} [\varepsilon_{s,\rho}(X, Y)]%
\label{eqn:Cont:WorstCaseRateFormulationExp}
\end{align}
Their solutions are particular cases of cost-constrained random coding, and are thus described below.

\subsection{Cost-Constrained Random Coding}
We study codebooks in which, unlike i.i.d. random coding, codewords are constrained to satisfy multiple per-codeword cost functions \cite{6763080}.
Specifically, we consider codebooks in which each codeword $\xv^{(m)}$ satisfies%
\begin{align}\label{eqn:Cont:c0}
	\frac{1}{n} \sum_{i=1}^n c_0(x_i^{(m)}) &\leq P\\\label{eqn:Cont:cl}
	\bigg\vert \frac{1}{n} \sum_{i=1}^n c_\ell(x_i^{(m)}) - \phi_\ell \bigg\vert & \leq \frac{\delta}{n} \quad, \quad \ell={1, \dots, L}%
\end{align}
where $\phi_\ell \triangleq \EX[c_\ell(X)]$ for $\ell=1,\dotsc, L$ and $\delta>0$.
Observe that these input cost functions might be related to the channel having an input cost constraint $c_0(\cdot)$, or might be design cost constraints that allow us to shape codewords in a certain way as  for the cost functions $c_{\ell}(\cdot)$ for $\ell=1,\dotsc, L$.

The corresponding mismatched information and exponent densities now read%
\begin{align}\label{eqn:Cont:FormulationCC_is}
	i_{s,\lv}(x,y) &\triangleq \log \frac{\widehat{W}(y|x)^s e^{\lv^{\!T\!} \cv(x)}}{{\int} \QX(\bar{x}) \widehat{W}(y|\bar{x})^s e^{\lv^{\!T\!}\cv(\bar{x})} \DD \bar{x}}
	\\ \label{eqn:Cont:FormulationCC_E0}
	\varepsilon_{s, \lv, \bar{\lv}, \rho}(x, y) &\triangleq  \bigg(\! \frac{{\int} \QX(\bar{x}) \widehat{W}(y|\bar{x})^s e^{\bar\lv^{\!T\!} (\cv(\bar{x})-\pv)} \DD \bar{x}}{\widehat{W}(y|x)^s e^{{\lv}^{\!T\!} (\cv(x)-\pv)}} \bigg)^{\!\rho}%
\end{align}
where we have defined the following vectors to denote  compactly the cost functions and corresponding weights:
\begin{align}
	\cv(x) &\triangleq \begin{bmatrix}
		c_0(x), c_1(x), \dotsc, c_L(x)
	\end{bmatrix}^T\\
	\pv &\triangleq \begin{bmatrix}
		P, \phi_1, \dotsc, \phi_L
	\end{bmatrix}^T\\
	\lv &\triangleq
	\begin{bmatrix}
		\lambda_0, \lambda_1, \dotsc, \lambda_L
	\end{bmatrix}^T 
	\\
	\bar{\lv} &\triangleq
	\begin{bmatrix}
		\bar{\lambda}_0, \bar{\lambda}_1, \dotsc, \bar{\lambda}_L
	\end{bmatrix}^T
\end{align}
with $\lambda_0,\bar\lambda_0\geq0$ and $\lambda_\ell,\bar\lambda_\ell\in\RR$ for $\ell=1,\dotsc,L$.
The problems we want to solve are%
\begin{align}\label{eqn:Cont:FormulationCost_is}
	\underline{I}_{\text{\normalfont cost}}(\QX&, \widehat{W}\!, r) = \operatorname*{min}_{W \in \mathcal{B}}\  \operatorname*{sup}_{s \ge 0, \lv} \EX_{\QX{\times}W} [ i_{s, \lv}(X, Y)]
	\\\label{eqn:Cont:FormulationCost_E0}
	\underline{E}{}_0^{\text{cost}}(\QX&, \widehat{W}\!, r) = \nonumber\\
	&\operatorname*{min}_{W \in \mathcal{B}} \operatorname*{sup}_{s \ge 0, \lv, \bar{\lv}} -\log \EX_{\QX{\times}W} [\varepsilon_{s,\lv,\bar{\lv},\rho}(X, Y)].%
\end{align}
The main difference with respect to the i.i.d. case is the need to further optimize over vectors $\lv,\bar{\lv}$. Note that \eqref{eqn:Cont:WorstCaseRateFormulation}--\eqref{eqn:Cont:WorstCaseRateFormulationExp} can be obtained from \eqref{eqn:Cont:FormulationCost_is}--\eqref{eqn:Cont:FormulationCost_E0} by setting $\lv=\bar{\lv}=\zerov$. The following theorem provides the solutions to all problems.

\begin{theorem}\label{theorem:CCcontinuous}
	Consider a family of continuous-alphabet channels $W(y|x)$, a mismatched decoder based on the channel estimate $\widehat{W}(y|x)$ and a fixed input distribution $\QX(x)$ satisfying \eqref{eqn:Def2}. Then, for every $r \ge 0$ sufficiently small:
	
	(i) the worst-case achievable rates for cost-constrained and i.i.d. random coding can be expressed as%
	\begin{align}
	\underline{I}_{\text{\normalfont cost}}(\QX, \widehat{W}\!, r) =\ & \operatorname*{sup}_{s\ge0, \lv} \bigg\{ I_{s, \lv}^{\text{\normalfont ML}}(\QX, \widehat{W}) \nonumber\\
	&- \sqrt{2r V_{s, \lv}(\QX, \widehat{W}) } + o(\sqrt{r}) \bigg\}\\
	\underline{I}_{\text{\normalfont GMI}}(\QX, \widehat{W}\!, r) =\ & \operatorname*{sup}_{s\ge0} \bigg\{ I_{s,\zerov}^{\text{\normalfont ML}}(\QX, \widehat{W}) \nonumber\\
	&- \sqrt{2r V_{s, \zerov}(\QX, \widehat{W})} + o(\sqrt{r}) \bigg\}
	\end{align}
	where%
	\begin{align}
		I^\text{\normalfont ML}_{s, \lv}(\QX, \widehat{W}) &\triangleq \EX_{\QX{\times}\widehat{W}} \big[i_{s,\lv}(X, Y)\big]\\
		V_{s, \lv}(\QX, \widehat{W}) &\triangleq \EX_{\QX} \big[ \text{\normalfont Var}_{\widehat{W}}[i_{s,\lv}(X, Y) | X] \big].
	\end{align}
	
	(ii) the worst-case mismatched $E_0$ functions for cost-constrained and i.i.d. random coding can be expressed as%
	\begin{align}	 \label{eqn:cont:FinalExpressionCostExponent}
	\underline{E}_0^{\text{\normalfont cost}}&(\QX, \widehat{W}\!, \rho, r) = \operatorname*{sup}_{s\ge0,\lv, \bar{\lv}} \bigg\{ E_{s,\lv,\bar{\lv}, \rho}^{\text{\normalfont ML}}(\QX, \widehat{W}) \nonumber\\
	&- \log \bigg( 1+ \sqrt{2 r V_{s,\lv,\bar{\lv},\rho}(\QX, \widehat{W})}\bigg) +o(\sqrt{r}) \bigg\}\\
	\underline{E}_0^{\text{\normalfont iid}}&(\QX, \widehat{W}\!, \rho, r) = \operatorname*{sup}_{s\ge0} \bigg\{ E_{s,\zerov,\zerov, \rho}^{\text{\normalfont ML}}(\QX, \widehat{W})\nonumber\\ 
	&- \log \bigg( 1+ \sqrt{2 r V_{s,\zerov,\zerov,\rho}(\QX, \widehat{W})} \bigg) +o(\sqrt{r}) \bigg\} 
	\label{eqn:cont:FinalExpressioniidExponent}
	\end{align}
	where%
	\begin{align}
		\label{eqn:Cont:MLDecodingiidExponent}
		E_{s, \lv, \bar{\lv}, \rho}^{\text{\normalfont ML}}(\QX, \widehat{W}) &\triangleq -\log \EX_{\QX{\times}\widehat{W}} \big[\varepsilon_{s, \lv, \bar{\lv}, \rho}(X, Y)\big]\\
		V_{s, \lv, \bar{\lv}, \rho}(\QX, \widehat{W}) &\triangleq \frac{\EX_{\QX} \big[ \text{\normalfont Var}_{\widehat{W}}[\varepsilon_{s, \lv, \bar{\lv},\rho}(X, Y) | X] \big]}{\EX_{\QX{\times}\widehat{W}}^2[\varepsilon_{s, \lv, \bar{\lv}, \rho}(X,Y)]}.
	\end{align}
\end{theorem}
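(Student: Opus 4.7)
The plan is to adapt the proof strategy of Theorems~\ref{theorem:WorstCaseLMRate} and~\ref{theorem:WorstCaseE0} to the continuous setting, where the central technical change is that the inner minimization is now over a function space, so finite-dimensional Lagrangian optimization is replaced by the calculus of variations. Write $\theta(y|x) \triangleq W(y|x) - \widehat{W}(y|x)$ for the perturbation and extend the expansion \eqref{eqn:DKLapproximation} to integral form, so that the ball constraint becomes $\tfrac{1}{2}\iint \QX(x)\theta(y|x)^2/\widehat{W}(y|x)\,\DD x\,\DD y \le r$ up to the same $o(\cdot)$ penalty, which is absorbed into the objective as in the discrete case. The normalization and feasibility conditions $\int\theta(y|x)\DD y = 0$ for every $x$ and $\widehat{W}(y|x)+\theta(y|x)\geq0$ a.e.\ must be enforced; the second one again yields a small-$r$ sufficient condition that is granted by the regime of interest.

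For part (i), fix $s\geq0$ and $\lv$. Using the decomposition $\EX_{\QX\times W}[i_{s,\lv}(X,Y)] = I^{\text{ML}}_{s,\lv}(\QX,\widehat{W}) + \iint \QX(x)\theta(y|x) i_{s,\lv}(x,y)\,\DD x\,\DD y$, the inner minimization is a linear functional of $\theta$ subject to a quadratic (ball) constraint and a linear (normalization) constraint. Introducing Lagrange multipliers $\mu\geq0$ for the ball constraint and $\eta(x)$ for the normalization, the Euler--Lagrange stationarity condition $\delta\mathcal{L}/\delta\theta(y|x)=0$ yields
\begin{equation}
\theta^\ast(y|x) = -\tfrac{\widehat{W}(y|x)}{\mu \QX(x)}\bigl(i_{s,\lv}(x,y)-\eta(x)\bigr).
\end{equation}
Enforcing $\int \theta^\ast(y|x)\DD y = 0$ identifies $\eta(x) = \EX_{\widehat{W}}[i_{s,\lv}(x,Y)]$, while the active ball constraint pins $\mu$. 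Substituting back and swapping min with the outer sup (by the same minimax argument used in Lemma~\ref{lemma:ExponentEquivalentToGallagerFunction}, justified by concavity/convexity of the continuous objectives in the respective variables) yields the $-\sqrt{2rV_{s,\lv}(\QX,\widehat{W})}$ penalty after absorbing the $o(\cdot)$ term. The GMI case follows by setting $\lv=\zerov$.

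For part (ii), the structure is identical but the linear coefficient of $\theta$ is now $\varepsilon_{s,\lv,\bar\lv,\rho}(x,y)$ rather than $i_{s,\lv}(x,y)$. Because $E_0$ involves a logarithm of the expectation, the worst-case perturbation scales the expectation by a factor $(1+\sqrt{2rV_{s,\lv,\bar\lv,\rho}(\QX,\widehat{W})})$ rather than producing an additive penalty, exactly reproducing \eqref{eqn:cont:FinalExpressionCostExponent}--\eqref{eqn:cont:FinalExpressioniidExponent} after taking logarithms and absorbing the $o(\sqrt{r})$. The minimax swap is again valid as the objective is convex in $W$ and concave (after fixing $\rho$) in $(s,\lv,\bar\lv)$ in the small-$r$ regime.

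The main obstacle will be the well-posedness of the variational argument in the continuous setting. Two issues deserve care: (a) verifying that $\varphi_{s,\lv}(x,\cdot)$ and $\varphi_{s,\lv,\bar\lv,\rho}(x,\cdot)$ are square-integrable under $\widehat{W}$ so that the variances $V_{s,\lv}$ and $V_{s,\lv,\bar\lv,\rho}$ are finite, and (b) guaranteeing that the optimal $\theta^\ast$ preserves the non-negativity of $W^\ast$ a.e., which generally fails on the thin tails of $\widehat{W}$ when the support is unbounded; this is the source of the discontinuities flagged in the introduction. Both are handled by an auxiliary small-$r$ condition analogous to \eqref{eqn:r_pos} together with the Borade--Zheng approximation regime, so the resulting expansion is accurate uniformly on compacta of the space of admissible $(s,\lv,\bar\lv)$. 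The detailed derivation is deferred to the corresponding appendix.
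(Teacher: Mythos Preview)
Your proposal is correct and follows essentially the same approach as the paper: decompose the objective as $I^{\text{ML}}_{s,\lv}$ plus a linear functional of the perturbation $\theta$, impose the chi-squared ball and normalization constraints via Lagrange multipliers, solve the Euler--Lagrange stationarity condition (which reduces to a pointwise algebraic equation since the integrand has no dependence on $\partial_x W$ or $\partial_y W$), identify the multiplier for the normalization as the conditional mean $\EX_{\widehat W}[i_{s,\lv}(x,Y)]$ and the ball multiplier as $\sqrt{V_{s,\lv}/(2r)}$, then handle part~(ii) by the same substitution $i_{s,\lv}\to\varepsilon_{s,\lv,\bar\lv,\rho}$ inside a logarithm exactly as in Appendix~\ref{ProofiidGallagerFunction}. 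The paper's own appendix is in fact terser than your sketch---it simply notes that $F$ in the Lagrangian is independent of $W_x,W_y$ so the Euler--Lagrange system collapses to $\partial F/\partial W=0$, and it defers the tail/positivity discussion you raise to the later worst-case-channel subsection rather than treating it as part of the proof.
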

\begin{proof}
	The solution to worst-case achievable rates is found by resorting to the calculus of variations \cite{gelfand2000calculus}; see Appendix \ref{App:ProofCont} for the proof. The worst-case mismatched Gallager $E_0$ function is a reformulation of the worst-case achievable rates, as shown in Appendix \ref{ProofiidGallagerFunction}, followed by the steps in Appendix \ref{App:ProofCont}.
\end{proof}

\subsection{Example: Gaussian Codebooks and Nearest Neighbor Decoding for Arbitrary Continuous-Alphabet Channels}
We next study the nearest neighbor decoder when combined with Gaussian inputs by particularizing the above results to the estimated channel $\widehat{W}(y|x)=\mathcal{N}(y{-x};\hat{\sigma}^2)$ and input distribution $\QX(x)=\mathcal{N}(x;P)$.
We make no additional assumption on the nature of the memoryless channel, other than being in a relative entropy ball of radius $r$ centered at the nearest neighbor decoder.
Appendix \ref{app:aux} details the specific computations; here we only show the final results.

\subsubsection{Gaussian i.i.d. codebooks}
the worst-case GMI rate can be expressed as%
\begin{align} \label{eqn:Cont:Igmi_opt}
	\underline{{I}}_{\text{\normalfont GMI}}(\QX,\widehat{W}\!, r) =\ & \sup_{s \ge 0} \Bigg\{ \frac{1}{2} \log\left(1+s\Gamma\right) + \frac{\Gamma(1-s)}{2(s^{-1}+\Gamma)} 
	\nonumber \\
	&- \sqrt{r \cdot \frac{\Gamma(2+s^2\Gamma)}{(s^{-1} + \Gamma)^2}} + o(\sqrt{r}) \Bigg\}%
\end{align}
where $\Gamma \triangleq P/\hat{\sigma}^2$ denotes the estimated signal-to-noise ratio assuming an additive noise structure even if the channels in $\mathcal{B}$ are not necessarily additive.

Moreover, as $r\rightarrow 0$ \eqref{eqn:Cont:Igmi_opt} is dominated by the first two terms, for which $s=1$ is the optimal value and thus \eqref{eqn:Cont:Igmi_opt} can be accurately approximated as%
\begin{align} \label{eqn:Cont:Igmi_opt_approx}
\underline{{I}}_{\text{\normalfont GMI}}(\QX,\widehat{W}\!,r) &\approx \frac{1}{2} \log\left(1+\Gamma\right) - \sqrt{r \cdot \frac{ \Gamma(2+\Gamma)}{(1+\Gamma)^2}}.%
\end{align}
Intuitively, information rates close to second order coding rates can be achieved when the true channel is close to a Gaussian channel in terms of relative entropy. 
This approximation, aided by \eqref{eqn:Cont:Igmi_opt}, helps quantify the gain of the parameter $s\ge0$. Fig. \ref{fig:cont} presents a numerical example of this gain.

A similar analysis  can be performed for the worst-case Gallager $E_0$ function (see Appendix \ref{app:exp_gauss} for details). The corresponding terms in the approximation are given by
\begin{align}
	E_{s, \zerov, \zerov, \rho}^{\text{\normalfont ML}}(\QX, \widehat{W}) &= \frac{\rho}{2}  \log(1+\Gamma s) \nonumber\\
	&+ \frac{1}{2} \log\! \left(\! 1+ \frac{\rho \Gamma (1-s-\rho s)}{\Gamma+s^{-1}}\! \right)
\end{align}
and%
\begin{align}
	V_{s,\zerov,\zerov,\rho}(\QX, \widehat{W}) &= \frac{1}{\sqrt{\Gamma{+}s^{-1}}} \frac{\Gamma{+} s^{-1} + \rho \Gamma(1{-}s{-}\rho s)}{\!\sqrt{\Gamma{+}s^{-1} + 2\rho \Gamma (1{-} s {-} 2\rho s)}} \nonumber\\
	&- \frac{\sqrt{2}}{\rho s} \frac{\Gamma{+}s^{-1} + \rho \Gamma(1{-}s{-}\rho s)}{\Gamma{+} s^{-1} + \rho \Gamma (2{-}s{-}2\rho s)}.
\end{align}

\subsubsection{Gaussian spherical codebooks}
when codewords satisfy the per-codeword power constraint \eqref{eqn:Cont:c0} with $c_0(x) = x^2$, we have the following achievable rate%
\begin{align} \label{eqn:Cont:Icost_opt}
	\underline{{I}}_{\text{\normalfont cost}}(\QX&,\widehat{W}\!, r) = \sup_{s \ge 0, \lambda \geq 0} \Bigg\{ \frac{1}{2} \log\left(g(\Gamma)+s\Gamma\right) \nonumber\\
	&+ \frac{\Gamma(1{-}s)+s^{-1}g(\Gamma)(1{-}g(\Gamma))}{2(\Gamma+s^{-1}g(\Gamma))}\nonumber \\ 
	&- \sqrt{r \cdot \frac{\Gamma(2g^2(\Gamma)+s^2\Gamma)}{(s^{-1}g(\Gamma) + \Gamma)^2}} + o(\sqrt{r}) \Bigg\}
\end{align}
where $g(\Gamma) \triangleq 1-2\lambda\Gamma$. Setting $\lambda=0$ gives that $g(\Gamma)=1$ and thus $\underline{{I}}_{\text{\normalfont cost}}=\underline{I}_{\text{GMI}}$. In general, we have that $\underline{{I}}_{\text{\normalfont cost}} \geq \underline{{I}}_{\text{\normalfont GMI}}$.
Numerical evidence of such a gain is shown in Fig. \ref{fig:cont}.

\subsection{Example: Gaussian Codebooks and Nearest Neighbor Decoding for Additive Noise Channels}
Further closed-form results can be obtained when we add the assumption that the channel adds the noise $Z$, but its probability density function $W(z)$ for $z \in \mathbb{R}$ is unknown. In this case, we redefine $\mathcal{B}$ to contain only additive noise channels $W(y|x) = W(y-x)$. That is,%
\begin{equation}
	W \in \mathcal{B} = \{ P_{Y|X} \in \mathcal{S} \colon D(\widehat{W}\|P_{Y|X})\leq r\}%
\end{equation}
where thanks to the additive structure of both channel and decoding metric, the relative entropy becomes independent of the input distribution $\QX$, as%
\begin{equation}
	\DKL(\widehat{W} \| P_{Y|X}) \triangleq \int_{\mathbb{R}} \widehat{W}(y) \log \frac{\widehat{W}(y)}{P_{Y|X}(y)} \DD y.%
\end{equation}

We next study the cases of Gaussian i.i.d. and spherical codebooks in Section \ref{ContinuousChannels:Additive:Gaussian} and Gaussian i.i.d. codebooks with a fixed cost in Section \ref{ContinuousChannels:Additive:FixedCostGaussian}. The same steps in Appendix \ref{App:ProofCont} are followed to derive all the results below.

\subsubsection{Gaussian i.i.d. and spherical codebooks} \label{ContinuousChannels:Additive:Gaussian}
For additive noise channels with a known second-order moment $\EX[Z^2]$, Lapidoth \cite{532892} showed that the rate
\begin{equation}
I_{\text{\normalfont Gauss}}(\QX, \widehat{W}) = \frac{1}{2} \log \left(1+\frac{P}{\EX[Z^2]}\right)%
\end{equation}
is achievable by the nearest neighbor decoder $\widehat{W}(y|x)=\mathcal{N}(y{-}x;1)$ regardless of the additive noise distribution provided that its second-order moment $\EX[Z^2]<\infty$ is known. 
We consider instead an available estimate for the unknown second-order moment which we decompose into mean $\hat{\mu}$ and variance $\hat{\sigma}^2$, and a uncertainty level given by the ball radius $r$.
Hence, the channel estimate is the nearest neighbor decoder $\widehat{W}(y|x) = \mathcal{N}(y{-}x{-}\hat{\mu};\hat{\sigma}^2)$.
We wish to solve%
\begin{align} \label{eqn:Cont:minIgauss}
\underline{{I}}_{\text{\normalfont Gauss}}(\QX, \widehat{W}\!, r) = \min_{W \in \mathcal{B}}\ \frac{1}{2} \log \left(1+\frac{P}{\EX[(Z{-}\hat{\mu})^2]}\right)%
\end{align}
where the dependence on the noise distribution appears only as a function of $\EX[(Z{-}\hat{\mu})^2]$.
In other words, we wish to find the noise distribution in $\mathcal{B}$ that gives the highest second-order moment with respect to $\hat{\mu}$.
By approximating the relative entropy ball as before, the problem can be solved analytically, leading to the following rate%
\begin{align} \label{eqn:Cont:Igauss_opt}
	\underline{I}_{\text{\normalfont Gauss}}&(\QX, \widehat{W}\!, r) = \frac{1}{2} \log \bigg(\! 1+ \frac{\Gamma}{1+ 2\sqrt{r}} \!{\bigg)} + o(\sqrt{r})
\end{align}
which can be expanded to facilitate comparison with \eqref{eqn:Cont:Igmi_opt_approx}:%
\begin{align} \label{eqn:Cont:Igauss_opt_p1approx}
\underline{{I}}_{\text{\normalfont Gauss}}(\QX, \widehat{W}\!, r) = \frac{1}{2} \log \left( 1+ \Gamma\right) - \frac{\Gamma \sqrt{r}}{1+\Gamma} + o(\sqrt{r}).%
\end{align}
In contrast to \cite[Th. 3.9]{Foundations}, there is no penalty for considering the non-zero mean provided that the true channel is sufficiently close to a Gaussian of mean $\hat{\mu}$ and variance $\hat{\sigma}$.
The result also gives a second-order term proportional to $\sqrt{r}$, but the scaling is advantageous compared to \eqref{eqn:Cont:Igmi_opt_approx}, where channel additivity is not assumed.
Simulations in Fig. \ref{fig:cont} support this statement since knowing the additive structure of the channel operation leads to a significant rate increase, especially at low signal-to-noise ratios.
Fig. \ref{fig:cont} also shows the true optimal values obtained by solving \eqref{eqn:Cont:minIgauss} as detailed in Section \ref{sec:Cont:Channels}. These values are found to be very close to our approximation in \eqref{eqn:Cont:Igauss_opt_p1approx} when $o(\sqrt{r})=0$, a coincidence that appears to be accidental. More importantly, both approximations in \eqref{eqn:Cont:Igauss_opt}--\eqref{eqn:Cont:Igauss_opt_p1approx} are remarkably close to the true solution.

\subsubsection{Fixed-cost Gaussian codebooks} \label{ContinuousChannels:Additive:FixedCostGaussian}
For additive noise channels with a known second-order moment $\EX[Z^2]$, Scarlett \textit{et al.} \cite[Th. 3.9]{Foundations} showed that the rate
\begin{equation}
	I_{\text{\normalfont fixed-cost}}(\QX, \widehat{W}) = \frac{1}{2} \log\! \left(1+\frac{P}{\VAR[Z]}\right)%
\end{equation}
is achievable by cost-constrained random coding with a single auxiliary cost $c_1(x)=x$. In other words, the cost function $c_1(x)=x$ is able to cancel the effect of the unknown channel mean. We wish to solve%
\begin{align}
	\underline{{I}}_{\text{\normalfont fixed-cost}}(\QX, \widehat{W}\!, r) = \min_{W \in \mathcal{B}}\ \frac{1}{2} \log\! \left(1+\frac{P}{\VAR[Z]}\right)%
\end{align}
or equivalently, find the noise distribution in $\mathcal{B}$ with the highest variance. The following rate is achieved%
\begin{align}
	\underline{{I}}_{\text{\normalfont fixed-cost}}(\QX, \widehat{W}\!, r) = \frac{1}{2} \log\! \left( \! 1+ \frac{\Gamma}{1+2\sqrt{r}} \!\right) +o(\sqrt{r})
\end{align}
coinciding with \eqref{eqn:Cont:Igauss_opt}. Interestingly, the mean estimate $\hat{\mu}$  only appears in the constraint $W\in\mathcal{B}$ but the achievable rate does not depend explicitly on it. The picture is slightly different  to what occurs in \eqref{eqn:Cont:Igauss_opt}, where the channel mean has to be accurately estimated so as to design the nearest neighbor decoder accordingly.
This reinforces the usefulness of cost-constrained coding, as a simple auxiliary cost function makes the achievable rate independent of the channel mean.

\begin{figure}[t!] \centering
	\begin{centering}
    \pgfplotstableread{Tables/Ratesvsr_J.txt}\Rates
    \pgfplotstableread{Tables/Ratesvsr_J_opt.txt}\Opt
       
    \begin{tikzpicture}[scale=0.875] 
    \begin{axis}
    [%
    width=8cm,
    height=6cm,
    scale only axis,
    xmin=0,
    xmax=1e-2,
    ymin=0.36,
    ymax=0.5,
    xtick={0,2e-3,...,1e-2},
    minor xtick={1e-3,2e-3,...,1e-2},
    ytick={0.36,0.38,...,0.5},
    minor ytick={0,1,...,12},
    y tick label style={
      /pgf/number format/.cd,
      fixed,
      fixed zerofill,
      precision=2,
      /tikz/.cd
    },
    legend columns=2,   
    ylabel={Rate [bits/c.u.]},
    y label style={at={(axis description cs:0.05,.5)},rotate=0,anchor=south},
    xlabel={Ball radius $r$},
    axis background/.style={fill=white},
    legend style={legend cell align=left, draw=white!15!black, legend pos=south west, anchor=south west},
    legend style={font=\footnotesize},
    legend entries={{$\underline{I}_{\text{\normalfont cost}}$ \eqref{eqn:Cont:Icost_opt}}, {$\underline{I}_{\text{\normalfont GMI}}$ \eqref{eqn:Cont:Igmi_opt}}, {$\underline{I}_{\text{\normalfont GMI}}$ approx. \eqref{eqn:Cont:Igmi_opt_approx}},  {$\underline{{I}}_{\text{\normalfont Gauss}}$ \eqref{eqn:Cont:Igauss_opt}}, {$\underline{{I}}_{\text{\normalfont Gauss}}$ approx. \eqref{eqn:Cont:Igauss_opt_p1approx}}, {Optimal  \eqref{eqn:Cont:minIgauss}} }
    ]
    
    \addplot[blue, solid, line width=1, mark=*, mark size=1.25, mark options=solid] table[y index=5] from \Rates;
    \addplot[red, solid, line width=1, mark=none, mark size=2] table[y index=1] from \Rates;
    \addplot[red, dashed, line width=1, mark=none, mark size=2, mark options=solid] table[y index=2] from \Rates;

    \addplot[black, solid, line width=1, mark=triangle*, mark size=1.25, mark options=solid] table[y index=3] from \Rates;
    \addplot[black, dashdotted, line width=1, mark=none, mark size=2, mark options=solid] table[y index=4] from \Rates;

    \draw (axis cs:0.5e-2,0.45) ellipse (0.20cm and 0.5cm);
	\node at (axis cs:0.675e-2,0.46) {\footnotesize Additive $\widehat{W}(y{-}x)$};
	\draw (axis cs:0.5e-2,0.415) ellipse (0.20cm and 0.5cm);
	\node at (axis cs:0.675e-2,0.423) {\footnotesize Arbitrary $\widehat{W}(y|x)$};
	
	\addplot[olive, solid, line width=1, mark=x, mark size=3, only marks] table[y index=1] from \Opt;
	    
    \end{axis}
    \end{tikzpicture}%
\end{centering}
	\caption{Worst-case GMI and cost-constraint rates versus ball radius $r$ for Gaussian codewords $\QX(x) = \mathcal{N}(x;1)$ and the nearest neighbor decoder $\widehat{W}(y|x)=\mathcal{N}(y{-}x;1)$. These computations assume that $o(\sqrt{r})=0$. The optimal values from \eqref{eqn:Cont:minIgauss} are depicted using cross markers.}
	\label{fig:cont}
\end{figure}
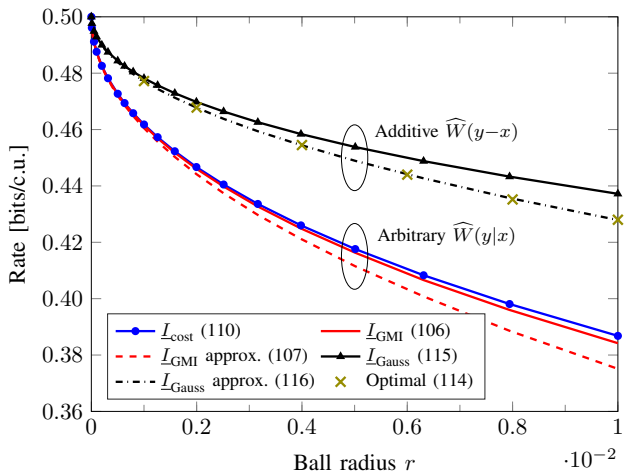

\subsection{Worst-Case Channels} \label{sec:Cont:Channels}
Throughout the manuscript, we have primarily focused on our approximation of relative entropy by chi-squared distance, which has produced the small terms $o(\sqrt{r})$ in every case we have analyzed. We have not given as much attention to the worst-case channels when examining both relative entropy and chi-squared balls.
In this respect, the continuous-alphabet case is likely one of the most intriguing cases to study as the structure of the respective worst-case channels when $\widehat{W}$ has unbounded support substantially differs depending on whether we consider a relative entropy or chi-squared distance ball.

We concentrate on the worst-case additive channel when Gaussian codebooks are combined with the nearest neighbor decoder, corresponding to%
\begin{align}
	W^\ast(z) &\triangleq \argmax_{W\in\mathcal{B}}\ \EX [Z^2] \nonumber\\
	&= \argmax_{W\in\mathcal{B}}\ \int_\mathbb{R} z^2 W(z) \DD z.
\end{align}
After tedious but straightforward analysis following standard variational tools for broken extremals \cite[Sec. 15]{gelfand2000calculus}, the worst-case channel in $\mathcal{B}$ is defined from the piece-wise expression
\begin{align}
	W^\ast_{\!z_0}(z) = \left\{ \begin{array}{lcc} \widehat{W}(z) & \text{if} & |z| \geq z_0 \\ \widehat{W}(z) \cdot \frac{\rho}{z^2-\lambda} & \text{if} & |z| < z_0 \end{array} \right.
\end{align}
for $\rho<0$ and $0<z_0<\lambda$ satisfying%
\begin{align}
	\rho &= \left( \int_{-z_0}^{z_0} \frac{\widehat{W}(z)}{z^2-\lambda} \DD z\right)^{-1}\\
	r &= \int_{-z_0}^{z_0} \widehat{W}(z) \log\! \left(\frac{z^2-\lambda}{\rho}\right)\! \DD z.
\end{align}
The worst-case channel probability density function is found after maximizing over $z_0>0$ as%
\begin{align} \label{eqn:Cont:TrueWCadd}
	 W^\ast(z) = \argmax_{z_0>0} \int_{\mathbb{R}} z^2 W^{\ast}_{\!z_0}(z) \DD z.%
\end{align}

The solution contrasts with the worst-case channel in the chi-squared distance ball $\tilde{\mathcal{B}}$ computed as in \cite[Appendix C]{10619112}
\begin{align} \label{eqn:Cont:AppWCadd}
	\tilde{W}^\ast(z) &= \argmax_{W \in \tilde{\mathcal{B}}}\ \EX[Z^2] \nonumber\\
	&= \widehat{W}(z) \left(1+ \sqrt{r} \cdot \frac{z^2 - \hat{\sigma}^2}{\hat{\sigma}^2} \right).
\end{align}
Fig. \ref{fig:Channels_cont} illustrates these differences, showing that the discontinuities at $z=\pm z_0$ for $W^*$ are not present in the corresponding approximation, $\tilde{W}^*$, computed when the chi-squared distance is instead adopted. The primary reason for this difference is that relative entropy, viewed as a functional of the channel $W$, is not continuous everywhere when the channel estimate $\widehat{W}$ has unbounded support, such as in the case of nearest neighbor decoding. In these situations, the functional admits a broken extremal as the maximizing distribution, in which only a portion of the support is optimized.

\begin{figure}[t!] \centering
	\begin{centering}
    \pgfplotstableread{Tables/Channelscont.txt}\X
    
    \begin{tikzpicture}[scale=0.875] 
    \begin{axis}
    [%
    width=8cm,
    height=6cm,
    scale only axis,
    xmin=-6,
    xmax=+6,
    ymin=-16,
    ymax=0,
    xtick={-6,-4,...,6},
    minor xtick={-6,-5,...,6},
    ytick={-16,-14,...,0},
    y tick label style={
      /pgf/number format/.cd,
      fixed,
      fixed zerofill,
      precision=0,
      /tikz/.cd
    },
    ylabel={$\log W^\ast(z)$},
    y label style={at={(axis description cs:0.05,.5)},rotate=0,anchor=south},
    xlabel={$z$},
    axis background/.style={fill=white},
    legend style={legend cell align=left, at={(axis cs:-2.5,-15)},anchor=south west},
    legend style={font=\footnotesize},
    legend entries={{$r=10^{-2}$ ; $W^\ast$}, {$r=10^{-2}$ ; $\tilde{W}^\ast$}, {$r=10^{-3}$ ; $W^\ast$}, {$r=10^{-3}$ ; $\tilde{W}^\ast$}}
    ]
    
    \addplot[blue, solid, line width=1, mark=none, mark size=1.25, mark options=solid] table[x index=0, y index=1] from \X;
    \addplot[red, dashed, line width=0.5, mark=none, mark size=2] table[x index=0, y index=2] from \X;
    
    \addplot[olive, solid, line width=1, mark=none, mark size=2, mark options=solid] table[x index=3, y index=4] from \X;
    \addplot[black, dashed, line width=0.5, mark=none*, mark size=1.25, mark options=solid] table[x index=3, y index=5] from \X;
    
    \end{axis}
    \end{tikzpicture}%
\end{centering}
	\caption{Worst-case channel (in log scale) in the relative entropy ($W^\ast$ in \eqref{eqn:Cont:TrueWCadd}) and chi-squared ($\tilde{W}^\ast$ in \eqref{eqn:Cont:AppWCadd}) balls. Computations assume Gaussian codewords $\QX(x) = \mathcal{N}(x;1)$ and the nearest neighbor decoder $\widehat{W}(y|x)=\mathcal{N}(y{-}x;1)$. The worst-case GMI rates for $W^\ast$ are shown in Fig. \ref{fig:cont}.}
	\label{fig:Channels_cont}
\end{figure}
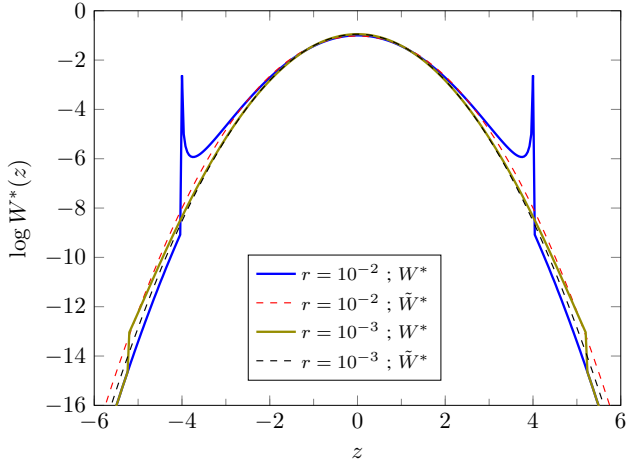

\section{Conclusion} \label{Conc}
We have studied a family of mismatched compound channels by assuming that the true channel lies within a small radius relative entropy ball centered at the decoding metric. We have shown that a second order expansion of the relative entropy ball for a small radius allows the derivation of closed-form results to worst-case (minimum) information rates and error exponents for discrete and continuous memoryless channels. Specifically, for both discrete- and continuous-alphabet memoryless channels, we derive penalty terms that are proportional to the square root of the ball radius. We have obtained new explicit expressions for the worst-case rate achievable with Gaussian codebooks and nearest neighbor decoding which complement the analytical findings derived by Lapidoth and Scarlett {\em et. al} \cite{lapidoth1998reliable,340469,Foundations,532892}.

\appendices
\renewcommand{\thesubsection}{\thesection-\Roman{subsection}}
\renewcommand\thesubsectiondis{\Roman{subsection}.}

\section{Proof of Theorem \ref{theorem:WorstCaseLMRate}}
\label{ProofLMRate}

Here, we provide the full derivation for the worst-case LM rate in Theorem \ref{theorem:WorstCaseLMRate}. Let us define%
\begin{align} \label{Appth1:Isa_def}
    I_{s,a}(\QX, W) &\triangleq \sum_{x,y} \QX(x) W(y|x) i_{s, a}(x, y)\\ %
    I^\text{ML}_{s,a}(\QX, \widehat{W}) &\triangleq \sum_{x,y} \QX(x) \widehat{W}(y|x) i_{s, a}(x, y), \label{Appth1:IsaML}
\end{align}
and let small mismatch be characterized by Definition \ref{def:small_mismatch}. Then, we formulate the worst-case LM rate $\underline{I}_{\text{LM}}$ as%
\begin{align} \label{Appth1:MinLMrate}
\underline{I}_{\text{LM}}(\QX, \widehat{W}\!, r) &= \operatorname*{min}_{W \in \mathcal{B}} \operatorname*{sup}_{s \ge 0, a(\cdot)} I_{s,a}(\QX, W).
\end{align}
Since $I_{s,a}$ is linear in $W$ and concave in $s\geq 0$ and $a(x)$ \cite[Ch. 2.3.4]{Foundations}, we can switch the order of the optimizations in application of the minimax theorem \cite[Th. 1]{Minimax}.
Then, by minimizing $I_{s,a}$ over all channels $W \in \mathcal{B}$ we obtain%
\begin{align}\label{Appth1:MinIsa1}
	\underline{I}_{s,a}(\QX, \widehat{W}\!, r) &\triangleq \operatorname*{min}_{W \in \mathcal{B}}\ I_{s,a}(\QX, W) \\
	\label{Appth1:MinIsa2}
	&= I^\text{ML}_{s,a}(\QX, \widehat{W}) + \operatorname*{min}_{W \in \mathcal{B}} \thetav^{T} {\nabla\! I_{s,a}}
\end{align}
where:
\begin{equation} \label{Appth1:Theta_def}
\thetav \triangleq \Big[
\theta(y_1 | x_1), \dots, \theta(y_{|\Yc|} | x_1), \theta(y_{1} | x_2), \dots, \theta(y_{|\Yc|} | x_{|\Xc|}) \Big]^T
\end{equation}
with $\theta(y_i|x_j) = W(y_i|x_j) {-} \widehat{W}(y_i|x_j)$ as defined in \eqref{eqn:theta_def}, and%
\begin{align} \label{Appth1:IsaGrad}
\nabla\! I_{s,a} \triangleq \big[ &\QX(x_1)i_{s,a}(x_1, y_1),\ \dots,\ \QX(x_1)i_{s,a}(x_1, y_{|\Yc|}), \notag \\
&\QX(x_2)i_{s,a}(x_2, y_1),\ \dots,\ \QX(x_2)i_{s,a}(x_2, y_{|\Yc|}), \notag \\
& \dots,\ \QX(x_{|\Xc|})i_{s,a}(x_{|\Xc|}, y_{|\Yc|}) \big]^T
\end{align}
is the gradient of $I_{s, a}$ \eqref{Appth1:Isa_def} with respect to each $W(y_i|x_j)$.

The constraint $W \in \mathcal{B}$ over all valid conditional probability distributions is also written in vector form as the following set of $1{+}|\Xc|$ constraints%
\begin{align}%
	\left\{ \begin{array}{rll} \label{Appth1:Constraints}
		\DKL (\widehat{W} \| W | \QX) &=& d(\thetav) {+} o(d(\thetav)) \leq r\\ 
	\thetav^{T\!} \underline{\boldsymbol{1}}_j &=& 0 \quad \text{for} \quad 1 \leq j \leq |\Xc|
	\end{array} \right.	
\end{align}
under the following definitions:
\begin{align} \label{Appth1:d_def}
	d(\thetav) &\triangleq \tfrac{1}{2} \thetav^{T\!} \mathbf{K} \thetav
	\\ \label{Appth1:KMatrix_def}
	\mathbf{K} &\triangleq \text{diag}\Bigg( \frac{\QX(x_1)}{\widehat{W}(y_1|x_1)},\ \dots,\ \frac{\QX(x_{|\Xc|})}{\widehat{W}(y_{|\Yc|} | x_{|\Xc|})}\Bigg)\\	
	\underline{\boldsymbol{1}}_j &\triangleq \begin{bmatrix}
		0\ \dots\ 0\ 1_{(1,j)}\ \dots\ 1_{(|\Yc|, j)}\ 0\ \dots\ 0
	\end{bmatrix}^T
	\label{Appth1:slicedOneVectorLM}
\end{align}
where the upper side of \eqref{Appth1:Constraints} guarantees that sought channels are within the ball of radius $r$, and the constraints in the lower side guarantee that ${\sum_y} W(y|x_j)=1$ for every $x_j \in \Xc$.

Therefore, problem in \eqref{Appth1:MinIsa2} can be written equivalently as%
\begin{align}\label{Appth1:MinIsa3}
	\operatorname*{min}_{W \in \mathcal{B}} \thetav^{T} \nabla\! I_{s,a}\ = 	\operatorname*{min}_{
		\substack{ d(\thetav) + o(d(\thetav)) \le r \\ \thetav^{T\!} \underline{\boldsymbol{1}}_j = 0,\ 1\le j \le |\Xc| }} \thetav^T \nabla\! I_{s,a}\ ,%
\end{align}
a convex problem that can be solved using the Lagrangian method \cite{ConvexOptimization}. This leads to a system of implicit equations that must be solved numerically using a trial-and-error algorithm for the unknown variables.

One of the key points to note is that there is no solution to \eqref{Appth1:MinIsa3} that satisfies $\DKL(\widehat{W}\| W| \QX) < r$, with strict inequality. As the objective function and the equality constraints are linear in $\thetav$, the following minimum%
\begin{align}
 	\operatorname*{min}_{
		\thetav^{T\!} \underline{\boldsymbol{1}}_j = 0,\ 1\le j \le |\Xc| }\ \thetav^T \nabla\! I_{s,a}%
\end{align}
is unbounded unless we impose boundary constraints on $W$. In this case, the minimum is achieved by the worst-case channel%
\begin{align}
	W^\ast(y|x) = \mathds{1}\{ i_{s,a}(x,y) = \min_{\bar{y}} i_{s,a}(x,\bar{y}) \}
\end{align}
for which the constraint $\DKL(\widehat{W}\| W^\ast| \QX) = \infty \leq r$ is never satisfied. Therefore, the inequality in \eqref{Appth1:Constraints} needs to be set with equality, and we will henceforth consider it with equality.
The geometric interpretation is that, because the objective function is linear, the optimal solution is found at the boundaries of the ellipsoid defined by the constraint. Specifically, the objective function intersects the ellipsoid at two points, one of which corresponds to the minimum we are seeking.

As we have said before, the solution to \eqref{Appth1:MinIsa3} is not explicit.
To achieve results that allow for conceptual understanding, we derive an approximation that is highly accurate as $r\rightarrow0$:%
\begin{enumerate}
	\item We rely on the fact that $o(d(\boldsymbol{\theta}))$ has a negligible impact for sufficiently small $r$. This is supported by \cite{EuclideanInformationTheory}, where it is stated that $d(\boldsymbol{\theta})$ well approximates $\DKL(\widehat{W}\|W|\QX)$ when $\widehat{W}$ is in a neighborhood of $W$. Therefore, we treat $o(d(\thetav))=\gamma$ as a small constant term independent of $\thetav$, absorbed in the constraint value via $\bar{r}\triangleq r-\gamma$. We then solve the following convex problem \cite{ConvexOptimization}%
	\begin{align}\label{Appth1:Lprobl1}
		J(\bar{r})\ &= 	\operatorname*{min}_{
			\substack{ d(\thetav) = \bar{r} \\ \thetav^{T\!} \underline{\boldsymbol{1}}_j = 0,\ 1\le j \le |\Xc| }} \!\thetav^T \nabla\! I_{s,a}\\%
		&= \operatorname*{max}_{\rho, \nu_{1}, \dots, \nu_{|\Xc|}} \operatorname*{min}_{
			\substack{ \boldsymbol{\theta}}}\ \mathcal{L}(\bar{r}) \label{Appth1:Lprobl2}
	\end{align}
	where $\mathcal{L}(\bar{r})$ is the following Lagrangian%
	\begin{align}
		\mathcal{L}(\bar{r}) &\triangleq \thetav^{T} \nabla\! I_{s,a} + \rho \big(d(\boldsymbol{\theta}) - \bar{r}\big) + {\sum_j} \nu_j \thetav^{T\!} \underline{\boldsymbol{1}}_j.
	\end{align}

	\item We expand $J(\bar{r})=J(r-\gamma)$ for small $\gamma$ as%
	\begin{align}
		J(\bar{r}) &= J(r) - J^{\prime}(r) \gamma
	\end{align}
	where the first term coincides with the minimum in \eqref{Appth1:Lprobl2} when $\bar{r}=r$ given by%
	\begin{align} \label{Appth1:Lprobl3}
		J(r) &= \operatorname*{max}_{\rho, \nu_{1}, \dots, \nu_{|\Xc|}} \operatorname*{min}_{
			\substack{ \boldsymbol{\theta}}}\ \mathcal{L}(r),
	\end{align}
	and the derivative in the second term is precisely the optimal Lagrange multiplier associated with the constraint $d(\thetav)= r$ in problem \eqref{Appth1:Lprobl3}: $\rho^\ast=-J^{\prime}(r)$.
\end{enumerate}
Our approach yields $J(\bar{r})=J(r)+\rho^\ast o(d(\thetav))$, which, when evaluated at the optimal $\thetav^\ast$, produces $J(\bar{r})=J(r)+\rho^\ast o(r)$.	

Therefore, our approach establishes the following equivalence between the optimization problems when adopting the relative entropy ball $\mathcal{B}$ in \eqref{eqn:WorstCaseRatesConstraint} or the chi-squared distance ball $\tilde{\mathcal{B}}$ in \eqref{eqn:WorstCaseRatesConstraint2} for sufficiently small $r\ge0$%
\begin{align}\label{Appth1:Lprobl4}
	\operatorname*{min}_{W \in \mathcal{B}} \thetav^{T} \nabla\! I_{s,a} &= \operatorname*{min}_{W \in \tilde{\mathcal{B}}} \thetav^{T} \nabla\! I_{s,a} + \rho^{\!\ast} o(r)
\end{align}
where
\begin{align}\label{Appth1:Lprobl5}
	\operatorname*{min}_{W \in \tilde{\mathcal{B}}} \thetav^{T} \nabla\! I_{s,a}\ &\triangleq \operatorname*{max}_{\rho, \nu_{1}, \dots, \nu_{|\Xc|}} \operatorname*{min}_{
		\substack{ \boldsymbol{\theta}}}\ \mathcal{L}(r).
\end{align}

\subsection{The Solution} \label{ProofLMRate_sol}
Problem \eqref{Appth1:Lprobl5} is linear with quadratic constraints, for which the KKT conditions are necessary and sufficient for optimality \cite{ConvexOptimization}. The stationary point equations are obtained by differentiating with respect to $\thetav^{T\!}$, $\rho$, and each $\nu_j$, and equating to zero. This gives the following system of equations:%
\begin{align} \label{Appth1:ThetaOpt}
	\thetav^\ast &= - \frac{\mathbf{K}^{-1}}{\rho^{\!\ast}} \bigg(\nabla\! I_{s,a} + {\sum_j} \nu_j^\ast \underline{\boldsymbol{1}}_j\bigg)\\
	\tfrac{1}{2} {\thetav^\ast}^{T\!} \mathbf{K}\thetav^\ast &= r\\
	{\thetav^\ast}^{T\!} \underline{\boldsymbol{1}}_j &=0 \quad \text{for} \quad j=1,\dots,|\Xc|.
\end{align}
We obtain%
\begin{align} \label{Appth1:nu_j_lm}
    \nu_j^\ast   &= - \QX(x_j)\EX_{\widehat{W}} [i_{s,a}(x_j, Y)]\\
    \rho^{\!\ast} &= \sqrt{V_{s,a}(\QX, \widehat{W})/(2r)} \\
    V_{s,a}(\QX, \widehat{W}) &\triangleq \EX_{\QX}[\text{Var}_{\widehat{W}}[i_{s,a}(X, Y)|X]]%
\end{align}
and%
\begin{equation} \label{eqn:FinalThetaWithNuExpandedLM}
    \thetav^\ast = \frac{\mathbf{K}^{-1}\!}{\rho^{\ast}} \bigg( {\sum_j} \QX(x_j) \EX_{\widehat{W}} [i_{s,a}(x_j, Y)] \underline{\boldsymbol{1}}_j - \nabla\! I_{s,a} \!\bigg)%
\end{equation}
where the element $\theta^\ast(y|x)$ is extracted from $\thetav^\ast$ \eqref{eqn:FinalThetaWithNuExpandedLM} as%
\begin{align}\label{eqn:Optimal_Theta_ij}
	\theta^\ast(y|x) &= - \sqrt{2r} \cdot \widehat{W}(y | x) \varphi_{s,a}(x, y), \\ \label{App:eqn:phi_def}
	\varphi_{s,a}(x, y) &\triangleq \frac{i_{s,a}(x, y) - \EX_{\widehat{W}} [ i_{s,a}(x, Y) ] }{\sqrt{V_{s,a}(\QX, \widehat{W})}}.
\end{align}
The worst-case channel distribution obtained by isolating $\tilde{W}^{\ast}_{\!s,a}$ from $\theta^\ast(y|x) = \tilde{W}^{\ast}_{\!s,a}(y|x)-\widehat{W}(y|x)$ is%
\begin{align} 
	\tilde{W}_{\!s,a}^\ast(y|x) &= \widehat{W}(y|x) \Big( 1 - \sqrt{2r} \cdot \varphi_{s,a}(x, y) \Big).
\end{align}
We first substitute \eqref{eqn:Optimal_Theta_ij} into \eqref{Appth1:Lprobl4}, and account for the term $\rho^{\!\ast} o(r) = o(\sqrt{r})$. We then substitute the result into \eqref{Appth1:MinIsa2}, yielding the following expression:%
\begin{align}
    \underline{I}_{s,a}(\QX, \widehat{W}\!, r) =\ & I_{s,a}^{\text{ML}}(\QX, \widehat{W}) \nonumber\\
    &- \sqrt{2r V_{s,a}(\QX, \widehat{W}) } + o(\sqrt{r}).%
\end{align}
It is straightforward to check that the solution corresponds to a minimum as the cost function is linear with quadratic constraints and $\rho^{\!\ast}>0$.

\subsection{Discussion} \label{ProofLMRate_disc}
The only remaining issue to be proven is whether the worst-case channels are non-negative. This arises because the constraint $d(\thetav) = r$ in \eqref{Appth1:Lprobl5} does not necessarily guarantee non-negativity, unlike the case with $\DKL(\widehat{W} \| W | \QX) = d(\thetav) + o(d(\thetav)) = r$ for $\widehat{W} > 0$ in the original problem \eqref{Appth1:MinIsa3}.
Further analysis reveals that this issue is fully resolved when $r$ is small, without the need to impose additional constraints.

To evidence the above, we have solved the full optimization problem including constraints to ensure that all valid conditional channel distributions are non-negative, as $W(y|x) \ge 0$ for $(x,y)\in\Xc\times\Yc$, through the slack variables $\mu(x, y)\ge0$ satisfying $\mu(x,y) W(y|x)=0$. 
The solution can be expressed in terms of the following variables%
\begin{align}
	i_{s,a}^\mu(x,y) &\triangleq i_{s,a}(x,y) - \mu(x, y)\\
	\varphi_{s,a}^\mu(y|x) &\triangleq \frac{i_{s,a}^{\mu}(x, y) - \EX_{\widehat{W}}[i_{s,a}^{\mu}(x, Y)]}{ \sqrt{V_{s,a}^{\mu}(\QX, \widehat{W})} } \\
	V_{s,a}^\mu(\QX, \widehat{W}) &\triangleq \EX_{\QX}[\VAR_{\widehat{W}}[i_{s,a}^\mu(X, Y)|X]],
\end{align}
achieving the following minimum
\begin{align}
	\underline{I}_{s,a}(\QX, \widehat{W}\!, r) =\ & \max_{\mu(\cdot)\ge0} \bigg\{  \EX_{\QX{\times}\widehat{W}}[i_{s,a}^{\mu}(X, Y)]  \nonumber\\
	&- \sqrt{2r V_{s,a}^\mu(\QX, \widehat{W})} + o(\sqrt{r}) \bigg\}%
\end{align}
for the worst-case channel distribution%
\begin{align}
	\tilde{W}_{\!s,a}^\ast(y|x) &= \widehat{W}(y|x) \Big( 1 - \sqrt{2r} \cdot \varphi_{s,a}^\mu(x, y) \Big).
\end{align}
From the above expression, it is evident that when $r$ is sufficiently small, the second term has a minimal impact on the structure of $\widehat{W}(y|x)>0$.
This confirms the validity of our approach in the absence of positivity constraints for the regime of $r$ values that make $\tilde{W}_{\!s,a}^\ast(y|x) \geq 0$ when $\mu(x,y) = 0$.

\section{Proof of Theorem \ref{theorem:WorstCaseE0}}
\label{ProofiidGallagerFunction}
We determine the worst-case Gallager $E_0$ functions
\begin{align} \label{Appth2:WorstCaseE0iid}
	\underline{E}{}_0^{\text{iid}}(\QX, \widehat{W}\!, \rho, r) &= \operatorname*{min}_{W \in \mathcal{B}}\ \operatorname*{sup}_{s \ge 0}\ E_{s,\zerov,\zerov,\rho}(\QX,W)\\
	\label{Appth2:WorstCaseE0cc}
	\underline{E}{}_0^{\text{cc}}(\QX, \widehat{W}\!, \rho, r) &= \operatorname*{min}_{W \in \mathcal{B}} \operatorname*{sup}_{c_1\!(\cdot), c_2\!(\cdot)}\operatorname*{sup}_{s \ge 0, \lv, \bar{\lv}} \! E_{s, \lv, \bar{\lv},\rho}(\QX,W)%
\end{align}
using analogous arguments to those for the proof of the worst-case LM rate in Appendix \ref{ProofLMRate}. In particular, as $\underline{E}{}_0^{\text{iid}}$ can be obtained from $\underline{E}{}_0^{\text{cc}}$ by setting $\lv=\bar{\lv}=\zerov$ and by deleting the optimization over $c_1(\cdot), c_2(\cdot)$, we will only solve \eqref{Appth2:WorstCaseE0cc}. 

Firstly, as the problem is concave in $s\ge0, c_1(\cdot), c_2(\cdot)$ and convex in $W\in\mathcal{B}$, we make use of Fan's minimax Theorem \cite{Minimax} to move the minimization as the inner problem. Hence, our approach reduces to solving%
\begin{align} 
	\underline{E}_{s, \lv, \bar{\lv}, \rho}(\QX&, \widehat{W}\!, r) \triangleq \nonumber\\
	&\operatorname*{min}_{W \in \mathcal{B}} -\log \EX_{\QX{\times}W} [ \varepsilon_{s, \lv, \bar{\lv}, \rho}(X, Y)].%
	\label{eqn:logmax_E_s_rho}
\end{align}
Now, by adding and subtracting the term $E_{s, \lv, \bar{\lv}, \rho}^{\text{\normalfont ML}}(\QX, \widehat{W}) \triangleq - \log \EX_{\QX{\times}\widehat{W}} [ \varepsilon_{s, \lv, \bar{\lv}, \rho}(X, Y) ]$ and using $\theta(y|x)=W(y|x)-\widehat{W}(y|x)$, problem  \eqref{eqn:logmax_E_s_rho} can be rewritten as%
\begin{align}
	\underline{E}&_{s, \lv, \bar{\lv}, \rho}(\QX, \widehat{W}\!,r) = E_{s, \lv, \bar{\lv}, \rho}^{\text{\normalfont ML}}(\QX, \widehat{W})\nonumber\\ &- \log \left( 1+\operatorname*{max}_{W \in \mathcal{B}} \sum_{x, y}\frac{ \QX(x) \theta(y|x) \varepsilon_{s, \lv, \bar{\lv}, \rho}(x, y)}{\EX_{\QX{\times}\widehat{W}} [ \varepsilon_{s, \lv, \bar{\lv}, \rho}(X, Y) ]} \right)\!.%
\end{align}
Note that the inner maximization problem,%
\begin{align}
	\operatorname*{max}_{W \in \mathcal{B}}\ \sum_{x, y}\frac{ \QX(x) \theta(y|x) \varepsilon_{s, \lv, \bar{\lv}, \rho}(x, y)}{\EX_{\QX{\times}\widehat{W}} [ \varepsilon_{s, \lv, \bar{\lv}, \rho}(X, Y) ]},%
\end{align}
has the same form as problem \eqref{Appth1:MinIsa3} in Appendix \ref{ProofLMRate}, but with a substitution of $\varepsilon_{s, \lv, \bar{\lv}, \rho}(x, y)/\EX_{\QX{\times}\widehat{W}} [ \varepsilon_{s, \lv, \bar{\lv}, \rho}(X, Y)]$ by $i_{s, a}(x, y)$ and a change from minimization to maximization.

The solution is obtained straightforwardly by mimicking the derivation in Appendix \ref{ProofLMRate}. This gives%
\begin{align}
	\theta^\ast(y|x) &= \sqrt{2r} \cdot \frac{\widehat{W}(y|x) \varphi_{s, \lv, \bar{\lv}, \rho}(x, y)}{\EX_{\QX{\times}\widehat{W}}[\varepsilon_{s, \lv, \bar{\lv}, \rho}(X, Y)]}%
\end{align}
with%
\begin{align}
	\varphi_{s, \lv, \bar{\lv}, \rho}(x, y) &\triangleq \frac{\varepsilon_{s, \lv, \bar{\lv}, \rho}(x, y) {-} \EX_{\widehat{W}} [ \varepsilon_{s, \lv, \bar{\lv}, \rho}(x, Y) ] }{\sqrt{V_{s, \lv, \bar{\lv}, \rho}(\QX, \widehat{W})}}\\
	\!\! V_{s, \lv, \bar{\lv}, \rho}(\QX, \widehat{W}) &\triangleq \frac{\EX_{\QX} \big[ \text{\normalfont Var}_{\widehat{W}}[\varepsilon_{s, \lv, \bar{\lv}, \rho}(X, Y) | X] \big]}{\EX_{\QX{\times}\widehat{W}}^2[\varepsilon_{s, \lv, \bar{\lv}, \rho}(X,Y)]}%
\end{align}
achieving the following objective%
\begin{align}
		\underline{E}_{s, \lv, \bar{\lv}, \rho}(\QX, \widehat{W}\!, r) &=  E_{s, \lv, \bar{\lv}, \rho}^{\text{\normalfont ML}}(\QX, \widehat{W}) \nonumber\\
		-\log \bigg( 1 + &\sqrt{2 r V_{s, \lv, \bar{\lv}, \rho}(\QX, \widehat{W})} + o(\sqrt{r}) \bigg) \\
	&= E_{s, \lv, \bar{\lv}, \rho}^{\text{\normalfont ML}}(\QX, \widehat{W}) \nonumber\\
	-\log \bigg( 1 + &\sqrt{2 r V_{s, \lv, \bar{\lv}, \rho}(\QX, \widehat{W})}  \bigg) + o(\sqrt{r})
\end{align}
where for the second inequality we have used that $\log(x+o(\sqrt{r})) = \log(x) + o(\sqrt{r})$.

\section{The Case of a Symmetric $\widehat{W}$}
\label{ProofDiscreteSymmetricChannel}
For a symmetric (as Gallager \cite[Pag. 94]{GallagerWiley1968}) estimated channel $\widehat{W}$ and an equiprobable input distribution $\QX(x) = |\Xc|^{-1}$, expressions can be obtained in terms of%
\begin{align}
	\kappa_{t} &\triangleq \sum_x \widehat{W}(y|x)^t = \sum_y \widehat{W}_\text{sym}(y)^t \label{App:eqn:kappa}
\end{align}
where $\widehat{W}_{\text{sym}}(y)\equiv \widehat{W}_{\text{sym}}(y|x_1)$ denotes the first row of $\widehat{W}$. The second equality in \eqref{App:eqn:kappa} holds due to the symmetry of $\widehat{W}$.

\subsection{Computations relative to the achievable rate}
The mismatched information density reads%
\begin{align}
	i_{s, 0}(x, y) &= \log \frac{\widehat{W}(y|x)^s}{ \frac{1}{|\Xc|} {\sum_{\bar{x}}} \widehat{W}(y|\bar{x})^s } \nonumber \\
	&= \log \frac{|\Xc|}{\kappa_{\!s}} + \log \widehat{W}(y|x)^s.%
\end{align}
Therefore,%
\begin{align}
	I^{\text{ML}}_{s,0}(\QX, \widehat{W}) &= \EX_{\QX \times \widehat{W}}[i_{s, 0}(X, Y)] \nonumber \\
	&= \log \frac{|\Xc|}{\kappa_{\!s}} + s \cdot \EX_{\widehat{W}_{\text{sym}}}[\log \widehat{W}_{\text{sym}}]
\end{align}
and the expansion terms proportional to $\sqrt{r}$ are%
\begin{align}\hspace{-0.5em}
	V_{s,0}(\QX, \widehat{W}) &= \EX_{\QX} [\VAR_{\widehat{W}} [i_{s,0}(X,Y)|X]] \nonumber \\
	 &= \EX_{\QX} [\VAR_{\widehat{W}} [s\cdot \log \widehat{W}(Y|X)|X]] \nonumber \\
	&=s^2 \! \left( 
	\EX_{\widehat{W}_{\text{sym}}}^{} \hspace{-0.6em}[\log^2 \widehat{W}_{\text{sym}}] {-} \EX_{\widehat{W}_{\text{sym}}}^2 \hspace{-0.6em}[ \log \widehat{W}_{\text{sym}} ] \right)\! . \label{App:eqn:generic_additive_channel_v_function_proof}
\end{align}

\subsection{Computations relative to the Gallager $E_0$ function}
The mismatched exponent density reads%
\begin{align}
	\varepsilon_{s, \zerov, \zerov, \rho}(x, y) &= \left( \frac{\frac{1}{|\Xc|} {\sum_{\bar{x}}}  \widehat{W}(y|\bar{x})^s}{\widehat{W}(y|x)^s} \right)^\rho \notag \\
	&= \frac{\kappa_{\!s}^\rho}{|\Xc|^\rho} \widehat{W}(y|x)^{-s\rho}.%
\end{align}
Therefore%
\begin{align}
	E_{s, \zerov, \zerov, \rho}^{\text{\normalfont ML}}(\QX, \widehat{W}) &= -\log \EX_{\QX{\times}\widehat{W}}[\varepsilon_{s, \zerov, \zerov, \rho}(X,Y)] \nonumber\\
	&= \log  \frac{|\Xc|^\rho}{\kappa_{\!s}^\rho \kappa_{1{-}s\rho}}
\end{align}
and the expansion terms proportional to $\sqrt{r}$ are%
\begin{align}
	V_{s,\zerov, \zerov, \rho}(\QX, \widehat{W}) 
	&= \frac{\EX_{\QX} [ \VAR_{\widehat{W}}[ \varepsilon_{s, \zerov, \zerov, \rho}(X, Y)|X]]}{\EX_{\QX{\times}\widehat{W}}^2[\varepsilon_{s, \zerov, \zerov, \rho}(X,Y)]} \nonumber \\
	&= \frac{\EX_{\QX} [ \VAR_{\widehat{W}}[ \widehat{W}(Y|X)^{- s \rho}|X]]}{\EX_{\QX{\times}\widehat{W}}^2[\widehat{W}(Y|X)^{- s \rho}]} \nonumber \\
	&= \frac{\kappa_{1-2s\rho} - \kappa_{1-s\rho}^2}{\kappa_{1{-}s\rho}^2} .
\end{align}

\section{Proof of Theorem \ref{theorem:CCcontinuous}} \label{App:ProofCont}
The continuous-alphabet case is similar to the discrete-alphabet case, but now we work with probability density functions instead of probability mass functions, and integrals instead of summations. Therefore, we resort to the calculus of variations to address the minimization in the function space.

We want to solve the following variational problem%
\begin{align}
	\min_{W \in \mathcal{B}} {\iint_{\mathbb{R}^2}} \QX(x) W(y|x) i_{s,\lv}(x, y) \DD x \DD y%
\end{align}
where the constraint $W \in \mathcal{B}$ considers conditional probability density functions $W(y|x) \geq 0$ that satisfy%
\begin{align}
	\!\left\{\!\!\! \begin{array}{rl}
		d(\widehat{W}\|W|\QX) + o(d(\widehat{W}\|W|\QX)\!) &\leq\ r\\
	{\int_{\mathbb{R}}} W(y|x) \DD y &=\ 1\ \text{for}\ x \in \mathbb{R}
	\end{array}\right.
\end{align}
with%
\begin{align}
	d(\widehat{W}\|W|\QX) \triangleq\! {\iint_{\mathbb{R}^2}} \QX(x) \frac{(W(y|x) {-} \widehat{W}(y|x)\!)^2}{2\widehat{W}(y|x)} \DD x \DD y.%
\end{align}
For sufficiently small $r$, the same trick as in \eqref{Appth1:Lprobl2} can be applied to move the $o(\cdot)$ term in the constraint as a penalty term in the objective function. We will omit the technical derivations, as it follows the same rules outlined in Appendix \ref{ProofLMRate}, and focus on solving the following variational problem%
\begin{align}
	\min_{\substack{d(\widehat{W}\|W|\QX) = r\\ {\int_{\mathbb{R}}} W(y|x)\DD y = 1}} {\iint_{\mathbb{R}^2}} \!\QX(x) W(y|x) i_{s,\lv}(x, y) \DD x \DD y + \rho^{\!\ast} \!o(r)%
\end{align}
where the differentiability of the objective functional and constraints is assumed under an appropriately chosen norm for channels with continuous derivatives, as functionals are either linear or quadratic in $W$.

Now, we proceed by minimizing the Lagrangian%
\begin{align}
	\mathcal{L}[W] = \iint_{\mathbb{R}^2} \!F(x, y, W(y|x)) \DD x \DD y%
\end{align}
with%
\begin{align} \label{App:eqn:F}
	F(x, y, W) \triangleq\ & \QX(x) Wi_{s,\lv}(x, y) + \nu(x) W \nonumber\\
	&+ \rho \QX(x)\frac{(W {-} \widehat{W}(y|x)\!)^2}{2\widehat{W}(y|x)}%
\end{align}
and where we have obviated $W$-independent constant terms. 
We follow the rules in \cite[Sec. V]{gelfand2000calculus}, where the stationary point $W^{\ast}(y|x)$ is found by solving the Euler-Lagrange equation%
\begin{align}
	\frac{\partial F}{\partial W} - \frac{\partial}{\partial x} \frac{\partial F}{\partial W_{\!x}} - \frac{\partial}{\partial y} \frac{\partial F}{\partial W_{\!y}} =0 \quad \text{in} \quad (x,y)\in\mathbb{R}^2.%
\end{align}
In our case, since $F$ in \eqref{App:eqn:F} does not depend on the partial derivatives of $W$, denoted $W_{x}$ and $W_{y}$, the above reduces to solving $\tfrac{\partial F}{\partial W} = 0$ in $(x,y)\in\mathbb{R}^2$ for $\{W^{\ast}(y|x), \nu^{\!\ast}(x), \rho^{\!\ast}\}$ jointly with the problem constraints.
Specifically, we have%
\begin{align}
	W^{\ast}(y|x) &= \widehat{W}(y|x) \!\left( 1- \frac{\QX(x)i_{s,\lv}(x,y){+}\nu^{\!\ast}(x)}{\rho^{\!\ast} \QX(x)} \right)
\end{align}
with%
\begin{align}
	\nu^{\!\ast}(x) &\triangleq -\QX(x)\EX_{\widehat{W}}[i_{s,\lv}(x, Y)]\\
	\rho^{\!\ast} &\triangleq \sqrt{V_{s, \lv}(\QX, \widehat{W})/(2r)}\\
	V_{s,\lv} (\QX, \widehat{W}) &\triangleq \EX_{\QX} [\VAR_{\widehat{W}}[i_{s,\lv}(X,Y)|X]].%
\end{align}
The stationary point $W^{\ast}(y|x)$ corresponds to the global minimum as the problem is linear in $W$ subject to quadratic constraints in $W$ and $\rho^{\ast}>0$.

\section{Gaussian Codebooks and Nearest Neighbor Decoding} \label{app:aux}
For Gaussian i.i.d. codebooks $\QX(x) = \mathcal{N}(x; P)$ and the nearest neighbor decoder $\widehat{W}(y|x) = \mathcal{N}(y{-}x;\hat{\sigma}^2)$, we have
\begin{align} \label{eqn:AppGaussian:Whats}
	\widehat{W} (y|x)^s &= \alpha_s \cdot \mathcal{N}(y{-}x; \hat{\sigma}^2 s^{-1})
	\\\label{eqn:AppGaussian:EWhats}
	\EX_{\QX}[\widehat{W}(y|X)^s] &= \alpha_s \cdot \mathcal{N}(y;\hat{\sigma}^2 s^{-1}{+}P)
\end{align}
with $\alpha_s \triangleq \sqrt{s^{-1} (2\pi\hat{\sigma}^2)^{1-s}}$.%

\subsection{Achievable rate with Gaussian i.i.d. codebooks}
The mismatched information density is (\textit{cf}. \cite[(28)-(29)]{7605463})%
\begin{align}
	i_{s,\zerov}(x, y) &= C_1 - \frac{s}{2}\frac{(y-x)^2}{\hat{\sigma}^2} + \frac{1}{2} \frac{y^2}{\hat{\sigma}^2 s^{-1}{+}P}
\end{align}
with
\begin{align}
	C_1 &\triangleq \frac{1}{2} \log \left(1+\frac{sP}{\hat{\sigma}^2}\right),%
\end{align}
and the respective expectations are%
\begin{align}
	\EX_{\widehat{W}} [i_{s,\zerov}(x, Y)] &= C_1 + \frac{1}{2} \frac{x^2{-}sP}{\hat{\sigma}^2 s^{-1}{+}P} \\
	\EX_{\QX {\times} \widehat{W}} [i_{s,\zerov}(X, Y)] &= C_1 + \frac{P}{2} \frac{1-s}{\hat{\sigma}^2 s^{-1}{+}P}.%
\end{align}
Note that $I^\text{\normalfont ML}_{s, \zerov}(\QX, \widehat{W}) =\EX_{\QX{\times}\widehat{W}} [i_{s,\zerov}(X, Y)]$.

Now we turn our attention to the term $V_{s,\zerov}(\QX,\widehat{W}) = \EX_{\QX {\times} \widehat{W}} [i_{s,\zerov}^2(X, Y)] - \EX_{\QX}[\EX_{\widehat{W}}^2[i_{s,\zerov}(X, Y)|X]]$. After tedious but otherwise straightforward calculations, we find its two terms as follows 
\begin{align}
	\EX_{\QX {\times} \widehat{W}} [i_{s,\zerov}^2(X, Y)] =\ & K_1 + \frac{3 s^2}{4} + \frac{3(\hat{\sigma}^2{+}P)^2}{4(\hat{\sigma}^2 s^{-1}{+}P)^2} \nonumber\\
	&- \frac{1}{2} \frac{s(3\hat{\sigma}^2{+}P)}{(\hat{\sigma}^2 s^{-1}{+}P)^2}\\
	\EX_{\QX}[\EX_{\widehat{W}\!}^2[i_{s,\zerov}(X, Y)|X]] =\ & K_1 + \frac{P^2(3-2s+s^2)}{4(\hat{\sigma}^2 s^{-1}{+}P)^2}
\end{align}
where $K_{1} \triangleq C_1^2 + C_1P(1{-}s)(\hat{\sigma}^2 s^{-1}{+}P)^{-2}$, leading to%
\begin{align}
	V_{s,\zerov}(\QX,\widehat{W}) &=\frac{P}{2} \frac{2\hat{\sigma}^2{+}s^2P}{(\hat{\sigma}^2 s^{-1}{+}P)^2}.
\end{align}

\subsection{$E_0$ function with Gaussian i.i.d. codebooks}
\label{app:exp_gauss}
The expressions for the worst-case mismatched Gallager $E_0$ functions can be obtained from%
\begin{align}
	\EX_{\QX}^t[\widehat{W}(y|X)^{s}] &\triangleq \alpha_s^{t} \frac{\sqrt{2\pi P_{\!s} t^{-1}}}{\sqrt{(2\pi P_{\!s})^t}} \cdot \mathcal{N}(y; P_{\!s} t^{-1})%
\end{align}
with $P_{\!s} \triangleq \hat{\sigma}^2 s^{-1}{+}P$, and%
\begin{align}
	A(s, t) &\triangleq \! \int_{\mathbb{R}} \EX_{\QX}^t [\widehat{W}(y|X)^{s}] \EX_{\QX}[\widehat{W}(y|X)^{1-st}] \DD y \nonumber\\
	&= \!\int_{\mathbb{R}}\! \frac{\alpha^t_s \alpha_{1-st}}{\sqrt{(2\pi P_{\!s})^{t}}} \sqrt{\!\frac{P_{\!s}}{P_{\!s} {+} t P_{\!1-st}}} \mathcal{N}\! \left(\! y; \frac{P_s P_{1-st}}{P_{s} {+} t P_{1-st}} \right) \! \DD y \nonumber\\
	&= \frac{\alpha^t_s \alpha_{1-st}}{\sqrt{(2\pi P_{\!s})^t}} \sqrt{\!\frac{P_{\!s}}{P_{\!s} + t P_{\!1-st}}} \nonumber\\
	&= \sqrt{\left(1+\frac{Ps}{\hat{\sigma}^2}\right)^{\!\!-t} \! \left( 1+ \frac{tP(1-s-ts)}{P+\hat{\sigma}^2 s^{-1}} \right)^{\!\!-1}}.
\end{align}
Therefore%
\begin{align}
	E^{\text{ML}}_{s,\zerov, \zerov, \rho}(\QX, \widehat{W}) =\ & -\log A(s, \rho) \nonumber \\
	=\ & \frac{\rho}{2} \log\!\left(1+\frac{Ps}{\hat{\sigma}^2}\right) \nonumber\\
	&+ \frac{1}{2} \log \left( 1+ \frac{\rho P (1{-}s{-}\rho s)}{P+\hat{\sigma}^2 s^{-1}} \right).
\end{align}

The dispersion term $V_{s, \zerov, \zerov, \rho}(\QX, \widehat{W})$ is computed as the following sum:
\begin{align}
	V_{s, \zerov, \zerov, \rho}(\QX, \widehat{W}) = V_1(\QX, \widehat{W})-V_2(\QX, \widehat{W}).
\end{align}
The first term is%
\begin{align}
	V_{1}(\QX, \widehat{W}) &\triangleq \frac{\EX_{\QX {\times} \widehat{W}} [ \varepsilon^2_{s, \zerov, \zerov,\rho}(X, Y)]}{\EX_{\QX{\times}\widehat{W}}^2[\varepsilon_{s, \zerov, \zerov, \rho}(X,Y)]} \nonumber\\
	&= \frac{A(s, 2\rho)}{A(s, \rho)^2} \nonumber \\%
	&= \frac{1}{\sqrt{P_s}} \frac{P{+}\hat{\sigma}^2 s^{-1} + \rho P(1{-}s{-}\rho s)}{\!\sqrt{P{+}\hat{\sigma}^2s^{-1} + 2\rho P (1{-} s {-} 2\rho s)}}.
\end{align}
The second term is%
\begin{align}
	V_2(\QX, \widehat{W}) &= \frac{\EX_{\QX}[\EX_{\widehat{W}}^2 [ \varepsilon_{s, \zerov, \zerov,\rho}(X, Y)|X]]}{\EX_{\QX{\times}\widehat{W}}^2[\varepsilon_{s, \zerov, \zerov, \rho}(X,Y)]}.
\end{align}
We compute the numerator through the steps%
\begin{align}
	&\EX_{\widehat{W}}^2 [\varepsilon_{s, \zerov, \zerov, \rho}(x,Y)] =  \frac{\alpha_{1-\rho s}^2 \alpha^{2\rho}_s}{\rho(2\pi P_{\!s})^{\rho-1}}  \mathcal{N}^2\!\left(\! x; \frac{P_s {-} \rho s P}{\rho(1{-}s\rho)} \right)\\
	&\EX_{\QX}\EX_{\widehat{W}}^2 [\varepsilon_{s, \zerov, \zerov, \rho}(X,Y)|X] = \frac{1}{s \rho} \frac{\sqrt{2}(\hat{\sigma}^{2} s^{-1})^{\rho} P_s^{1-\rho}}{P_s {+} \rho P (2{-}s{-}2s\rho)}
\end{align}
and thus%
\begin{align}
	V_2(\QX, \widehat{W}) &= \frac{\sqrt{2}}{\rho s} \frac{P{+}\hat{\sigma}^2s^{-1} + \rho P(1{-}s{-}\rho s)}{P{+}\hat{\sigma}^2 s^{-1} + \rho P (2{-}s{-}2\rho s)}.
\end{align}

\subsection{Gaussian spherical codebooks}
For spherical codebooks, we have%
\begin{align}
	\EX_{\QX}[\widehat{W}^s(y|X)e^{\lambda\! X^2}] = \frac{\alpha}{\sqrt{g(P)}} \cdot \mathcal{N}(y;\hat{\sigma}^2 s^{-1} {+} P')%
\end{align}
and the auxiliary variables%
\begin{align}
	C_2 &\triangleq \frac{1}{2} \log \left(g(P)+\frac{sP}{\hat{\sigma}^2}\right)\\%
	P' &\triangleq P/g(P)\\
	g(P) &\triangleq 1-2\lambda P.
\end{align}

The mismatched information density is%
\begin{align}
	i_{s,\lambda}(x, y) &= \lambda x^2+ C_2 - \frac{s}{2}\frac{(y{-}x)^2}{\hat{\sigma}^2} + \frac{1}{2} \frac{y^2}{\hat{\sigma}^2 s^{-1}{+}P'}%
\end{align}
and%
\begin{align}
	\EX_{\widehat{W}} [i_{s,\lambda}(x, Y)] &= \lambda x^2 + C_2 + \frac{1}{2} \frac{x^2{-}sP'}{\hat{\sigma}^2 s^{-1}{+}P'} \\
	\EX_{\QX {\times} \widehat{W}} [i_{s,\lambda}(X, Y)] &= \lambda P + C_2 + \frac{1}{2} \frac{P-sP'}{\hat{\sigma}^2 s^{-1}{+}P'}.%
\end{align}
After straightforward calculations, we can write%
\begin{align}
	I^\text{\normalfont ML}_{s, \lambda}(\QX, \widehat{W}) &= \frac{1}{2} \log\!\bigg(g(P)+\frac{sP}{\hat{\sigma}^2}\bigg) \nonumber \\
	&+ \frac{P(1{-}s){+}\hat{\sigma}^2 s^{-1}g(P)(1{-}g(P))}{2(P+\hat{\sigma}^2 s^{-1}g(P))}\\
	V_{s,\lambda} (\QX, \widehat{W}) &= \frac{P}{2} \frac{2\hat{\sigma}^2 g^2(P){+}s^2P}{(\hat{\sigma}^2 s^{-1}g(P) + P)^2}.
\end{align}

\bibliographystyle{IEEEtran}
\bibliography{finalreport}

\end{document}